\newtheorem{theorem}{Theorem}
\newtheorem{lemma}[theorem]{Lemma}
\newtheorem{proposition}[theorem]{Proposition}
\newtheorem{conjecture}[theorem]{Conjecture}
\newtheorem{definition}[theorem]{Definition}
\newcolumntype{C}[1]{>{\centering\let\newline\\\arraybackslash\hspace{0pt}}m{#1}}
\newcommand{\mini}{\fontsize{4}{5}\selectfont}
\newcommand{\pardo}[1]{}
\newcommand{\pardone}[1]{}
\let\cite\parencite
\newcommand{\xdashrightarrow}[2][]{\ext@arrow 0359\rightarrowfill@@{#1}{#2}}
\def\rightarrowfill@@{\arrowfill@@\relax\relbar\rightarrow}
\def\leftarrowfill@@{\arrowfill@@\leftarrow\relbar\relax}
\def\leftrightarrowfill@@{\arrowfill@@\leftarrow\relbar\rightarrow}
\def\arrowfill@@#1#2#3#4{%
  $\m@th\thickmuskip0mu\medmuskip\thickmuskip\thinmuskip\thickmuskip
   \relax#4#1
   \xleaders\hbox{$#4#2$}\hfill
   #3$%
}
\title[mode = title]{All ascents exponential from valued constraint graphs of pathwidth three}
\author[1,2]{Artem Kaznatcheev}
\author[1,3]{Willemijn Volgering}
\affiliation[1]{Department of Mathematics, Utrecht University}
\affiliation[2]{Department of Information and Computing Sciences, Utrecht University}
\affiliation[3]{Department of Physics, Utrecht University}
\date{25 February 2026}
\begin{document}

\begin{abstract}
Many combinatorial optimization problems can be formulated as finding an assignment that maximizes some pseudo-Boolean function (that we call the fitness function).
Strict local search starts with some assignment and follows some update rule to proceed to an adjacent assignment of strictly higher fitness.
This means that strict local search algorithms follow ascents in the fitness landscape of the pseudo-Boolean function.
The complexity of the pseudo-Boolean function (and the fitness landscapes that it represents) can be parameterized by properties of the valued constraint satisfaction problem (VCSP) that encodes the pseudo-Boolean function.
We focus on properties of the constraint graphs of the VCSP, with the intuition that spare graphs are less complex than dense ones.
Specifically, we argue that pathwidth is the natural sparsity parameter for understanding limits on the power of strict local search.
We show that prior constructions of sparse VCSPs where all ascents are exponentially long had pathwidth greater than or equal to four.
We improve this this with our controlled doubling construction: 
a valued constraint satisfaction problem of pathwidth three where all ascents are exponentially long from a designated initial assignment.
We conclude that all strict local search algorithms can be forced to take an exponential number of steps even on simple valued constraint graphs of pathwidth three.
\end{abstract}

\begin{keywords}
Valued Constraint Satisfaction Problem \sep Pathwidth \sep Local Search \sep Parameterized Complexity
\end{keywords}

\maketitle
\thispagestyle{empty} 

\section{Introduction}

\pardo{From combinatorial optimization to fitness landscapes}
Many combinatorial optimization problems can be formulated as finding an assignment $x \in \{0,1\}^V$ where $V$ is some set of variable indexes 
that maximizes some pseudo-Boolean function $f: \{0,1\}^V \rightarrow \mathbb{Z}$~\cite{pseudoBool,QPB} that we call the \emph{fitness function}.
Often, assignments that are `similar' yield similar fitness, so it is natural to introduce an adjacency structure based on just assignments themselves, independent of the particular fitness function.
The simplest such structure is to call two assignments \emph{adjacent} if they differ at a single bit.
A \emph{fitness landscape} is such a pair of pseudo-Boolean function and assignment adjacency structure~\cite{W32,evoPLS,repCP}.

\pardo{local search and ascents}
Local search heuristics are popular methods for navigating such fitness landscapes~\cite{LocalSearchCO_Thesis,LocalSearch_Book1}.
Local search starts with some assignment and follows some update rule to proceed to an adjacent assignment until it eventually finds a good one.
We say the method is a \emph{strict local search} if each update step increases the fitness of the assignment.
Such a method stops at some local peak $x^*$ such that every assignment $y$ adjacent to $x^*$ has the same or lower fitness ($f(y) \leq f(x^*)$).
The resulting sequence $x^0, x^1, \ldots, x^T = x^*$ of step-wise adjacent assignments of increasing fitness is an \emph{ascent} in the fitness landscape $f$~\cite{pw4,pw2MSc,KV2025}.

\pardo{Two modes of failure for local search and PLS-completeness}
There are two common modes of failure for local search.
One well-known mode of failure is getting stuck in a local peak that is not the global peak.
A less well-known mode of failure is taking exponentially long to find any (local) peak -- even in cases where the peak is unique.
\textcite{PLS} introduced the complexity class of polynomial-local search (PLS) to encode the problem of finding \emph{any} local peak (not just the global one) and thus study this second mode of failure for both local and non-local algorithms.
Under this coarse-grained complexity, it is not hard to see that fitness landscapes are PLS-complete even if we restrict to $f$ that are at most quadratic~\cite{W2SAT_PLS1,W2SAT_PLS2,PLS_VCSP2013,PLS_Survey}.
This means that (unless FP = PLS; which is something that most researchers do not believe to be likely) finding \emph{any} local optimum in a general quadratic fitness landscapes is computationally intractable for any polynomial-time algorithm, regardless of whether it relies on local search or not.
We want to build a more nuanced notion of complexity that both teaches us about the structure of easy vs hard fitness landscapes and that is specific to local search algorithms.

\subsection{Fitness landscapes from valued constraints}
\pardo{Parameterized complexity: VCSPs as compact representation of fitness landscapes.
Their constraint graphs.
How to read a constraint graph.
Sparsity and pathwidth.}

Our first step toward a more nuanced notion of complexity 
is to look at subclasses of pseudo-Boolean functions limited by some relevant parameter. 
Any pseudo-Boolean function $f$ can be written as a polynomial:
\begin{equation}
f(x) = \sum_{S \in \mathcal{S}} W(S) \prod_{i \in S}x_i
\end{equation}
\noindent where $\mathcal{S} \subseteq 2^{V}$ is a set of \emph{scopes} and $W: \mathcal{S} \rightarrow \mathbb{Z}$ is a \emph{constraint weight function}.
The triple $\mathcal{C} = (V,\mathcal{S},W)$ is the instance of a \emph{valued constraint satisfaction problem (VCSP)} that represents $f$~\cite{repCP}, and we overload the notation $\mathcal{C}$ to double as the represented pseudo-Boolean function with $\mathcal{C}(x) = f(x)$.
We allow VCSP-instances to be restricted to induced subproblems (generalizing to arbitrary arity from the binary case in \textcite{star_VCSP}):
\begin{definition}[Induced subproblem]
Given a VCSP-instance $\mathcal{C} = (V,\mathcal{S},W)$, any subset $V' \subset V$, and background assignment $z \in \{0,1\}^{V \setminus V'}$,
let $\mathcal{S}' = \{S \cap V' | S \in \mathcal{S}\}$ and $W': \mathcal{S}' \rightarrow \mathbb{Z}$ given by 
\begin{equation}
W'(R)= \sum_{Q \in \mathcal{N}(R,V\setminus V')} W(R \cup Q) \prod_{i \in Q} y_i
\label{eq:subW}
\end{equation}
where $\mathcal{N}(R,V\setminus V') = \{Q \; | \; Q \in V \setminus V' \text{ and } R \cup Q \in \mathcal{S}\}$ is the set of (hyper-edge completion) neighbors of $R$ in $V\setminus V'$,
then $\mathcal{C}' = (V',\mathcal{S'},W')$ is the \emph{induced subproblem} of $\mathcal{C}$ on $V'$ in background $z$.
\end{definition}
\noindent VCSP-subproblems select subcubes of the fitness landscape:
\begin{proposition}
If $\mathcal{C}' = (V',\mathcal{S'},W')$ is an induced subproblem of $\mathcal{C} = (V,\mathcal{S},W)$ on $V'$ in background $z$, we have $\mathcal{C'}(y) = \mathcal{C}(yz)$ for all $y \in \{0,1\}^{V'}$.
\end{proposition}
\begin{proof}
We establish this by showing that the sum over weights of $S \in \mathcal{S}$ in $\mathcal{C}$ (\cref{eq:Cyz} below) and the double sum over weights of $R\cup Q$  for $R \in \mathcal{S}'$ and $Q \in \mathcal{N}(R, V\setminus V')$ in $\mathcal{C}'$ (\cref{eq:C'y} below) are summing over the same terms.
First, notice that for each $S \in \mathcal{S}$ we have the set $S \setminus V' \in \mathcal{N}(S \cap V', V\setminus V')$. 
So by setting $R = S \cap V'$ and $Q = S \setminus V'$, we have at least one $R \cup Q$ in the dobule sum of $\mathcal{C}$' for each $S$ in $\mathcal{C}$.
Second, to see that there is not more than one such $R \cup Q$ for each $S$, notice that for all other $R \in \mathcal{S}'$ if $R \neq S \setminus V'$ then for all $Q \in \mathcal{N}(R,V\setminus V')$ we have that $R \cup Q \neq S$.
This allows us to rewrite $\mathcal{C}$:
\begin{align}
\mathcal{C}(yz) & = \sum_{S \in \mathcal{S}} W(S) \prod_{i \in S \setminus V'} z_i \prod_{j \in S \cap V'} y_j \label{eq:Cyz} \\
& = \sum_{R \in \mathcal{S}'} \underbrace{\Big( \; \quad \smashoperator[lr]{\sum_{Q \in \mathcal{N}(R,V\setminus V')}} W(R \cup Q) \prod_{i \in Q}z_i \Big)}_{W'(R)} \prod_{j \in R} y_j \label{eq:C'y}\\
& = \sum_{R \in \mathcal{S}'} W'(R) \prod_{j \in R} y_j = \mathcal{C}'(y) \qedhere
\end{align}
\end{proof}
For a general pseudo-Boolean function, the corresponding $\mathcal{S}$ will be exponentially large in $|V|$.
So if we want the number of variables ($|V|$) to be the `size' parameter for our complexity measures then it makes sense to restrict to subclasses of pseudo-Boolean functions that are implemented by sets $\mathcal{S}$ that contain only a polynomial number of scopes.
This can be done by, for example, limiting the \emph{arity} of scopes to at most a constant $k$ (i.e., $|S| \leq k$ for $S \in \mathcal{S}$).
We will be especially interested in pseudo-Boolean functions that are affine (i.e., at most unary scopes: $|S| \leq 1$),
quadratic (i.e., at most binary scopes: $|S| \leq 2$), and 
cubic (i.e., at most ternary scopes: $|S| \leq 3$).
We will define the \emph{(valued) constraint graph} of $f$ as the hypergraph with vertexes $V$, hyper-edges given by $\mathcal{S}$ with $W(S) \neq 0$, and labeled by the weight $W(S)$ of the constraint with corresponding scope.
For example:
\begin{equation}
\begin{tikzpicture}
    \node[draw, circle, label = 100:{$-1$}, minimum size =0.8cm] (1) at (-2,0) {$1$};
    \node[draw, circle, label = {$-3$}, minimum size =0.8cm] (2) at (0,0) {$2$};
    \node[draw, circle, label = 70:{$1$}, minimum size =0.8cm] (3) at (2,0) {$3$};
    
    \draw[draw = white, double = black, very thick] (1) -- (2) node [pos = 0.5, fill = white!, sloped] {$2$};
    \draw[draw = white, double = black, very thick] (2) -- (3) node [pos = 0.5, fill = white!, sloped] {$4$};
\end{tikzpicture}
\end{equation}
\noindent encodes $\mathcal{C}(x_1x_2x_3) = -x_1 - 3x_2 + x_3 + 2x_1x_2 + 4x_2x_3$.
We overload the VCSP-instance $\mathcal{C}$ to also refer to the valued constraint graph.
This allows us to measure the `simplicity' of a pseudo-Boolean function through the (hyper)graph parameters of the VCSP's constraint graph.

\pardo{Definition of pathwidth and why we should care.
$K_n$ minors as obstructions to pathwidth/treewidth.}
We are especially interested in measures of simplicity like maximum degree and pathwidth.
Given a set of variable indexes $V$ (i.e., vertices) and set of scopes $\mathcal{S}$ (i.e., hyperedges), 
a \emph{path decomposition} of $(V,\mathcal{S})$ is a sequence of sets (called bins) $P=\{X_1,X_2,\ldots, X_p\}$ where $X_r\subseteq V$ for $r\in[p]$ with the following three properties:
    (1) every vertex $v\in V$ is in at least one bin $X_r$;
    (2) for every hyperedge $S \in \mathcal{S}$ there exists an $r \in [p]$ such that $S \subseteq X_r$;
    (3) for every vertex $v\in V$, if $v\in X_r\cap X_s$ 
    then $v \in X_\ell$ for all $\ell$ such that $r\leq \ell\leq s$.
The \emph{width} of a path decomposition is one less than the size of the largest bin (i.e., $\max_{r\in [p]}|X_r|-1$).
The \emph{pathwidth} of $(V,\mathcal{S})$ is the minimum possible width of a path decomposition of $(V,\mathcal{S})$.
Pathwidth is a more restrictive variant of the general concept of \emph{treewidth}.
Since paths are a kind of tree, $\textrm{treewidth}(V,\mathcal{S}) \leq \text{pathwidth}(V,\mathcal{S})$.
One common way to certify that a graph cannot have low treewidth is to find a large clique as a minor of the graph.
Specficially, if $(V,\mathcal{S})$ has a $K_m$ minor then $m \leq \textrm{treewidth}(V,\mathcal{S}) \leq \text{pathwidth}(V,\mathcal{S})$.
For more details about pathwidth and treewidth and their role in parameterized complexity, see \textcite{B94,parameterizedBook}.

\subsection{Strict local search and the structure of ascents}
\label{sec:strictLocalSearch}

\pardo{Complexity relative to fixed set of algorithms: Different notions of hardness: some ascent long, steepest ascent long, all ascents long.}

We focus on pathwidth as our parameter of sparsity because it illuminates limits on the power of local search versus non-local algorithms.
Specifically, there exist non-local algorithms for finding a (global) optimum in VCSP-instances of bounded treewidth~\cite{BB73,VCSPsurvey}.
This means that VCSPs of bounded treewidth are not PLS-complete (unless FP = PLS).
But this tractability by non-local methods does not mean that finding a local peak in bounded treewidth VCSPs is tractable for local search.
Thus, our second step toward a more nuanced notion of complexity is to restrict the set of potential algorithms to just (certain kinds of) strict local search.
The efficiency of this restricted class of algorithms can then be measured by the length of ascents -- a structural property of fitness landscapes.
We build up this notion and review existing results with a sequence of three questions.

\pardo{All ascents short versus one ascent long:
from smooth landscapes (linear pseudoBoolean functions) to tree-structured VCSPs.
For some ascent long, could start with KM~\cite{KM72} and show that is not sparse~\cite{arityBSc} before moving on to \textcite{repCP}.}
The first question to consider: is there \emph{any} strict local search algorithm that fails on a specific subclass of fitness landscapes?
Or the dual question: are there subclasses of fitness landscapes where \emph{all} strict local search algorithms will find a peak in polynomial time?
Under our definitions, this is equivalent to a structural property of fitness landscapes: it is asking when a fitness landscape will have an exponential ascent versus when can we guarantee all ascents are of polynomial length?
For the primaly question, there are historic constructions like the \textcite{KM72} $d$-cube where an ascent visits all $2^d$ assignments on the way to the peak -- but these constructions cannot be represented by sparse VCSPs~\cite{tw7,arityBSc}.
For the dual question, the obvious partial answer is affine pseudo-Boolean functions, also known as \emph{smooth} fitness landscapes.
On smooth landscapes, each variable has a preferred assignment independent of the assignments of the other $d - 1$ variables.
As such, all ascents have length at most $d$.
This class of all ascents short can be pushed further to include some quadratic pseudo-Boolean functions.
For example, \textcite{repCP} showed that if a constraint graph is tree-structured then all ascents have length at most $\binom{d+1}{2}$.
They also showed that this result cannot be further improved by constructing families of fitness landscapes represented by a sparse VCSP with constraint graphs of maximum degree $3$ and pathwidth $2$ that have exponentially long ascents.
But both the \textcite{KM72} and \textcite{repCP} constructions are easily solved by some strict local search methods like greedy local search.

\pardo{Steepest ascent short versus long: start with KM and similar constructions~\cite{fittestHard2,fittestHard3} and winding landscapes~\cite{evoPLS} and switch to sparse graphs}
So, the second question to consider: does a particular frequently-used strict local search algorithm, like greedy local search, fail to be efficient on a specific subclass of fitness landscapes?
Or under our definitions, when will fitness landscapes have an exponential \emph{steepest} (rather than any) ascent?
Here again, there is a number of historic results that have exponentially long steepest ascents in pseudo-Boolean functions that are not represented by sparse VCSPs~\cite{fittestHard2,fittestHard3,evoPLS}.
Similar results also exist for other popular strict local search methods like random improving step~\cite{randomFitter1,randomFitter2} or random facet~\cite{randomFacetBound,conditionallySmooth}.
Recently, there was a series of improvements using similar techniques to construct sparse VCSPs that are hard for greedy local search.
The first result in the series showed a construction of pathwidth $7$~\cite{tw7}, which was improved to pathwidth $4$~\cite{pw4}, and finally to the best possible result of pathwidth $2$ (and maximum degree $3$)~\cite{pw2MSc,KV2025}.
\textcite{conditionallySmooth} also noted that the much older \textcite{hakenSteepest} exponential steepest ascent construction has pathwidth $3$, although \textcite{hakenSteepest} never studied the pathwidth of their own construction.
Finally, \textcite{conditionallySmooth} answered the dual question of finding  class of landscapes (in their case: conditionally-smooth) where many, but not all, local search methods find a peak in polynomial time.

\pardo{Some ascents short versus all ascents long: tight PLS and snake-in-a-box (long path problem~\cite{longpath})}
This brings us to our final question and the focus of this paper: on what fitness landscapes do \emph{all} strict local search algorithms fail?
Under our definitions, we express this as the structural property:
when will a family of fitness landscapes have all ascents exponential from some designated initial assignment?
Since quadratic pseudo-Boolean functions are PLS-complete under \emph{tight} PLS-reductions, we know that there are some families of fitness landscapes and initial assignments such that all ascents are exponentially long~\cite{W2SAT_PLS1,W2SAT_PLS2,PLS_VCSP2013,PLS_Survey}.
But the instances that we get through these reductions could have dense constraint graphs.
There are also historic examples, like \textcite{longpath}'s recursive construction of a family of fitness landscapes with all ascents exponential from a designated initial assignment.
But \textcite{tw7} proved that any VCSP realizing this recursive construction has to be dense.

\pardo{Spareseness of all-ascents long instances: MAX-CUT and degree.
Degree three is tractable~\cite{MAXCUT_deg3}, but degree four has exponential ascents~\cite{MT2010}; construction since improved by \textcite{MS2024}}

As with the case of long steepest ascent, there has been recent progress on all ascents long from sparse VCSPs.
But this progress has focused on bounded degree rather than pathwidth for sparseness
and Max-Cut rather than general VCSPs for representations.
Max-Cut is the restricted class of pseudo-Boolean functions that have the form:
\begin{equation}
f(x) = \sum_{ij \in E(G)} w_{ij} (x_i(1 - x_j) + (1 - x_i)x_j)
\label{eq:max-cut}
\end{equation}
\noindent where $E(G)$ are the edges of a graph with weights $w_{ij}$.
Clearly, \cref{eq:max-cut} can be rewritten as a binary Boolean VCSP-instance with the same constraint graph (thus, same maximum degree).
On the one hand, \textcite{MAXCUT_deg3} showed that if max degree of Max-Cut constraint graph is less than or equal to three then all ascents are short 
(similarly for general binary Boolean VCSPs, but only of degree less than or equal to two; Theorem 5.2 in \cite{KazThesis}).
On the other hand, \textcite{PLS_MAXCUT5} showed that Max-Cut in constraint graphs of max degree five is PLS-complete under tight reductions.
The remaining case of Max-Cut of degree four as PLS-complete or not is an open problem,
but \textcite{MT2010} have provided an explicit construction of a family of Max-Cut instances of degree four where all ascents are exponential from a designated initial assignment.
Very recently, \textcite{MS2024} have provided a much simpler Max-Cut construction of degree four with all ascents exponential from a designated initial assignment.

\subsection{Our contribution}

\pardo{Statement of main result: VCSP of pathwidth three where all ascents are exponentially long from some initial assignment.
Roadmap/section-sign-positing of how we will show this result.}

Given that max degree as a complexity parameter jumps abruptly from easy to PLS-complete, we believe that bounded pathwidth is a more interesting parameter for illuminating limits on the power of strict local search versus non-local algorithms.
In Section~\ref{sec:prior}, 
after reviewing the prior degree four Max-Cut constructions by \textcite{MT2010} and \textcite{MS2024}, 
we show that they both have a $K_5$ minor and thus pathwidth $\geq 4$ (\cref{prop:MT_K5,prop:MS_pw4}).
In Section~\ref{sec:Ternary-Construction}, we provide a new construction of a VCSP with pathwidth three (\cref{prop:pw3}).
In Section~\ref{sec:exp}, we prove that this construction has an initial assignment from which all ascents are exponentially long (\cref{thm:expAsc}).
Finally, in Section~\ref{sec:disc}, we conjecture that there does not exist any pathwidth two example with all ascents long.

\section{Pathwidth of prior Max-Cut constructions}
\label{sec:prior}

\pardo{Explain gadget structure and how it relates to pathwidth}

For both the prior all ascents long results, \textcite{MT2010} and \textcite{MS2024} designed gadgets ($\mathcal{C}^{\text{MT}/\text{MS}}_k$) on a fixed number of variables and constraints that are then connected together in a chain of gadgets $\{\mathcal{C}^{\text{MT}/\text{MS}}_1, \mathcal{C}^{\text{MT}/\text{MS}}_2, \ldots, \mathcal{C}^{\text{MT}/\text{MS}}_m\}$.
Each gadget has a designated set of two `connector' variables: for the $k$th gadget in the chain, the connector variables are the only ones that are in the scope of constraints to variables in the $(k+1)$th gadget.
All other constraints are internal to each gadget.\footnote{
See \textcite{star_VCSP} for more on chain of gadgets:
the long some-ascents/steepest-ascents examples in \cref{sec:strictLocalSearch} are also chains of gadgets.
}
\textcite{MT2010} had 28 variables and 32 internal constraints per gadget $\mathcal{C}^\text{MT}_k$, as seen in \cref{fig:MT2010Gadget};
the `connector variables' are C13 and D13 and they have two connections to the next gadget with C13 of $\mathcal{C}^\text{MT}_{k-1}$ (C0 in \cref{fig:MT2010Gadget}) connecting to C1 of $\mathcal{C}^\text{MT}_k$ and D13 of $\mathcal{C}^\text{MT}_{k - 1}$ (D0 in \cref{fig:MT2010Gadget}) connecting to C10 of $\mathcal{C}^\text{MT}_k$.
\textcite{MS2024} had 8 variables and 7 internal constraints per gadget ($\mathcal{C}_k^\text{MS}$), as seen in \cref{fig:MS2024};
the connector variables are $(k,1)$ and $(k,8)$ and they have a total of six constraints to $\mathcal{C}^\text{MS}_{k + 1}$: $\{(k+1,j),(k,1)\}$ for $j \in \{2,4,6\}$ and $\{(k+1,i),(k,8)\}$ for $i \in \{3,5,7\}$.

\begin{figure*}
    \centering
        \begin{subfigure}{\columnwidth}
        \centering
        \begin{tikzpicture}[scale=1.0, transform shape, every node/.style={minimum size=19pt, inner sep=2 pt, outer sep=0pt}]
    \def\w{m_k} 
    \def\s{s_k} 
    \tikzmath{\xunit = 1.8; \yunit =1.8;}
        \path[use as bounding box] (-1*\xunit,-2.9*\yunit) rectangle (4*\xunit,0.9*\yunit);
        \node[draw, dashed, circle] (C0) at (-1*\xunit,0) {\mini $C0$};
        \node[draw, dashed, circle] (D0) at (-1*\xunit,-0.5*\yunit) {\mini $D0$};
        
        \node[draw, circle] (C1) at (-0.5*\xunit,0) {\mini $C1$};
        \node[draw, circle] (C2) at (-0.5*\xunit,-0.5*\yunit) {\mini$C2$};
        \node[draw, circle] (C3) at (-0.5*\xunit,-1.25*\yunit) {\mini$C3$};
        \node[draw, circle] (C4) at (0*\xunit,-1.25*\yunit) {\mini$C4$};
        \node[draw, circle] (C5) at (0*\xunit,-0.5*\yunit) {\mini$C5$};
        \node[draw, circle] (C6) at (0*\xunit,0) {\mini$C6$};
        \node[draw, circle] (C7) at (0.5*\xunit,0) {\mini$C7$};
        \node[draw, circle] (C8) at (0.75*\xunit,-0.5*\yunit) {\mini$C8$};
        \node[draw, circle] (C9) at (1.0*\xunit,0) {\mini$C9$};
        \node[draw, circle] (C10) at (1.5*\xunit,0) {\mini$C10$};
        \node[draw, circle] (C12) at (2.0*\xunit,0) {\mini$C12$};
        \node[draw, circle] (C13) at (2.5*\xunit,0) {\mini$C13$};

        \node[draw, circle] (A6) at (0.5*\xunit,0.5*\yunit) {\mini$A6$};
        \node[draw, circle] (B6) at (0*\xunit,0.5*\yunit) {\mini$B6$};

        \node[draw, circle] (D10) at (1.5*\xunit,-0.5*\yunit) {\mini$D10$};
        \node[draw, circle] (E10) at (1.5*\xunit,-1.25*\yunit) {\mini$E10$};
        \node[draw, circle] (F10) at (1.5*\xunit,-2.0*\yunit) {\mini$F10$};

        \node[draw, circle] (G10) at (1.5*\xunit,-2.5*\yunit) {\mini$G10$};
        \node[draw, circle] (G11) at (2.0*\xunit,-2.5*\yunit) {\mini$G11$};
    
        \node[draw, circle] (A12) at (1.5*\xunit,0.5*\yunit) {\mini$A12$};
        \node[draw, circle] (B12) at (2.0*\xunit,0.5*\yunit) {\mini$B12$};
        \node[draw, circle] (D12) at (2.0*\xunit,-0.5*\yunit) {\mini$D12$};
        \node[draw, circle] (F12) at (2.0*\xunit,-1.5*\yunit) {\mini$F12$};
        \node[draw, circle] (G12) at (2.0*\xunit,-2.0*\yunit) {\mini$G12$};

        \node[draw, circle] (D13) at (2.5*\xunit,-0.5*\yunit) {\mini$D13$};
        \node[draw, circle] (F13) at (2.5*\xunit,-2.0*\yunit) {\mini$F13$};
        
        \node[draw, circle] (F14) at (2.5*\xunit,-2.5*\yunit) {\mini$F14$};
        \node[draw, circle] (E14) at (2.5*\xunit,-1.25*\yunit) {\mini$E14$};
        
        \draw[-] (C0) -- (C1);
        \draw[-] (C1) -- (C2);
        \draw[-] (C2) -- (C3);
        \draw[-] (C3) -- (C4);
        \draw[-] (C4) -- (C5);
        \draw[-] (C5) -- (C6);
        \draw[-] (C6) -- (C7);
        \draw[-] (C7) -- (C8);
        \draw[-] (C8) -- (C9);
        \draw[-] (C9) -- (C10);
        \draw[-] (C10) -- (C12);
        \draw[-] (C12) -- (C13);
        
        \draw[-] (A6) -- (B6);
        \draw[-] (B6) -- (C6);

        \draw[-] (C10) -- (D10);
        \draw[-] (D10) -- (E10);
        \draw[-] (E10) -- (F10);

        \draw[-] (F10) -- (G10);
        \draw[-] (G10) -- (G11);
        \draw[-] (G11) -- (G12);
        
        \draw[-] (A12) -- (B12);
        \draw[-] (B12) -- (C12);
        \draw[-] (C12) -- (D12);
        \draw[-] (D12) -- (F12);
        \draw[-] (F12) -- (G12);

        \draw[-] (F12) -- (F13);
        \draw[-] (F13) -- (F14);
        \draw[-] (D13) -- (E14);
        
         \draw[-] (G10) .. controls (2.4*\xunit,-3.5*\yunit) and (3.5*\xunit, -2.6*\yunit).. (E14);

        \draw[-] (C6) .. controls (0.75*\xunit, -1.0*\yunit) .. (G12);
        \draw[-] (E10) -- (G12);
        \draw[-] (E14) -- (D10);
        \draw[-] (E14) .. controls (1.5*\xunit,-0.9*\yunit)..(C8);
        \draw[-] (F10) .. controls +(left:35mm) and +(down:15mm).. (C3);
        \draw[-] (D0) .. controls +(down:45mm) and (0.5*\xunit, -1.8*\yunit) ..(C10);

        \draw[-, dashed] (C13) -- (3.0*\xunit,0*\yunit);
        \draw[-, dashed] (D13) -- (3.0*\xunit,-0.5*\yunit);
    \end{tikzpicture}
        \caption{
        }
        \label{fig:MT2010Gadget}
    \end{subfigure}
    \hfill
    \begin{subfigure}{\columnwidth}
        \centering
        \begin{tikzpicture}[scale=0.8, transform shape, every node/.style={minimum size=30pt, inner sep=2 pt, outer sep=0pt}, trim left = -5cm]
    
    \def\w{m_k} 
    \def\s{s_k} 
    \tikzmath{\xunit = 2.3; \yunit = 0.8;}
        \path[use as bounding box] (-3*\xunit,-6*\yunit) rectangle (3*\xunit,4*\yunit);
        \node[draw, dashed, circle] (8) at (1.5*\xunit,-4.0*\yunit) {{\mini $k-1, 8$}};
        \node[draw, dashed, circle] (1) at (1.3*\xunit,3.0*\yunit) {\mini $k-1,1$};

        \node[draw, dashed, circle] (2) at (-1.0*\xunit,2.0*\yunit) {\mini $k+1, 2$};
        \node[draw, dashed, circle] (4) at (-1.0*\xunit,0.0*\yunit) {\mini $k+1,4$};
        \node[draw, dashed, circle] (6) at (-1.0*\xunit,-2.0*\yunit) {\mini $k+1, 6$};
        
        \node[draw, dashed, circle] (3) at (-1.5*\xunit,1.0*\yunit) {\mini $k+1,3$};
        \node[draw, dashed, circle] (5) at (-1.5*\xunit,-1.0*\yunit) {\mini $k+1, 5$};
        \node[draw, dashed, circle] (7) at (-1.5*\xunit,-3.0*\yunit) {\mini $k+1,7$};
        
        \node[draw, circle] (k1) at (0.0*\xunit,3.0*\yunit) {\tiny $k,1$};
        \node[draw, circle] (k2) at (0.5*\xunit,2.0*\yunit) {\tiny $k, 2$};
        \node[draw, circle] (k3) at (0.0*\xunit,1.0*\yunit) {\tiny $k, 3$};  
        \node[draw, circle] (k4) at (0.5*\xunit,0) {\tiny $k,4$};
        \node[draw, circle] (k5) at (0.0*\xunit,-1.0*\yunit) {\tiny $k,5$};
        \node[draw, circle] (k6) at (0.5*\xunit,-2.0*\yunit) {\tiny $k,6$};
        \node[draw, circle] (k7) at (0.0*\xunit,-3.0*\yunit) {\tiny $k,7$};
        \node[draw, circle] (k8) at (0.5*\xunit,-4.0*\yunit) {\tiny $k,8$};
        
        \draw[-] (k1) -- (k2);
        \draw[-] (k2) -- (k3);
        \draw[-] (k3) -- (k4);
        \draw[-] (k4) -- (k5);
        \draw[-] (k5) -- (k6);
        \draw[-] (k6) -- (k7);
        \draw[-] (k7) -- (k8);
        
        \draw[-] (1) -- (k2);
        \draw[-] (1) -- (k4);
        \draw[-] (1) -- (k6);

        \draw[-, dashed] (k3) ..controls +(-1.0*\xunit,-4.3*\yunit) and +(-1.8*\xunit,-4.0*\yunit).. (8);
        \draw[-, dashed] (8) .. controls +(-1.2*\xunit, -2.5*\yunit) and +(-0.8*\xunit, -3.4*\yunit).. (k5);
        \draw[-, dashed] (8) ..controls +(-0.6*\xunit, -1.0*\yunit) and +(down:0.85*\xunit)..(k7);

        \draw[-, dashed] (k1) -- (2);
        \draw[-, dashed] (k1) -- (4);
        \draw[-, dashed] (k1) -- (6);


        \draw[-, dashed] (3) ..controls +(-1.1*\xunit,-4.5*\yunit) and +(-1.9*\xunit,-4.0*\yunit).. (k8);
        \draw[-, dashed] (k8) .. controls +(-1.3*\xunit, -2.5*\yunit) and +(-0.9*\xunit, -3.4*\yunit).. (5);
        \draw[-, dashed] (k8) ..controls +(-0.6*\xunit, -1.0*\yunit) and +(down:0.7*\xunit)..(7);
        
    \end{tikzpicture}
        \caption{}
        \label{fig:MS2024}
    \end{subfigure}
    \caption{(a) Gadget $\mathcal{C}^\text{MT}_k$ of Monien and Tscheuschner construction~\cite{MT2010} ($\mathcal{C}^\text{MT}$) and (b) Gadget $\mathcal{C}^\text{MS}_k$ of Michel and Scott construction~\cite{MS2024} ($\mathcal{C}^\text{MS}$). 
    Dotted edges and vertices illustrate connections to neighboring gadgets and the constraint weights are omitted.
    In (a) node labels refer to grid position in layout of Figure 5 of \cite{MT2010}: 
    letters for rows, numbers for columns.}
    \label{fig:placeholder}
\end{figure*}

For both $\mathcal{C}^\text{MT}$ and $\mathcal{C}^\text{MS}$, the resulting constraint graph in sparse: both max degree and pathwidth are bounded by a constant.
\textcite{MT2010} and \textcite{MS2024}'s goal was to get the degree of their Max-Cut constraint graphs down to four.
But we want to also show that their chain constructions also had very low pathwidth.
For this it is helpful to notice that the chain construction plays nicely with pathwidth/treewidth:

\begin{lemma}
Given a chain construction $\mathcal{C} = \{\mathcal{C}_1, \mathcal{C}_2, \cdots, \mathcal{C}_n\}$ with respective sets of `connector' variables $Y_1,Y_2,\cdots,Y_n$, the width of a path/tree-decomposition of $\mathcal{C}^X$ is upper bounded by the width a path/tree-decomposition $X_1,X_2,\ldots$, $X_p$ of $Y_{k-1} \cup \mathcal{C}_{k}$ with the extra condition that $X_1 \supseteq Y_{k - 1}$ and $X_p \supseteq Y_k$ are distinct leaves that contain the connectors. 
\end{lemma}

\begin{proof}
A gadget's path/tree-decomposition $X_1,X_2, \ldots, X_p$ with the condition that $Y_{k - 1} \subseteq X_1$ and $Y_k \subseteq X_p$ can be extended to a path/tree-decomposition of the whole construction $\mathcal{C}$ by connecting the paths/trees from each gadget's decomposition at the matching leaves.
\end{proof}

This gives us the pathwidth of both prior constructions.

\begin{proposition}
\textcite{MT2010} construction (\cref{fig:MT2010Gadget}) has $\text{treewidth}(\mathcal{C}^\text{MT}) = 4 \leq \text{pathwidth}(\mathcal{C}^\text{MT})\allowbreak \leq 5$.
\label{prop:MT_K5}
\end{proposition}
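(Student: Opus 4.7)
The plan has three parts, matching the three inequalities: a lower bound $\text{treewidth}(\mathcal{C}^\text{MT}) \geq 4$, an upper bound $\text{treewidth}(\mathcal{C}^\text{MT}) \leq 4$, and an upper bound $\text{pathwidth}(\mathcal{C}^\text{MT}) \leq 5$. The lower bound is the only non-routine part and is the one the surrounding text already advertises; the two upper bounds are verified by exhibiting explicit decompositions.

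First I would prove the lower bound by exhibiting a $K_5$ minor inside a single gadget $\mathcal{C}^\text{MT}_k$. Concretely, I would partition some subset of the 14 vertices labeled in \cref{fig:MT2010Gadget} into five branch sets $B_1,\ldots,B_5$, each inducing a connected subgraph of $\mathcal{C}^\text{MT}_k$, such that between every pair $B_i,B_j$ there is at least one edge of the gadget. Reading off the gadget, the natural candidates are to contract each ``column'' (or an appropriate combination of adjacent column and row vertices such as the $C$-row and $D$-row) into a single branch set; I would walk through all ten pairs and point to the witnessing edge in the figure. Once the minor is produced, the standard fact $m \leq \text{treewidth}(V,\mathcal S)$ recalled in the introduction gives $\text{treewidth}(\mathcal{C}^\text{MT}) \geq 4$, which by $\text{treewidth} \leq \text{pathwidth}$ also yields the left inequality of the proposition. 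This step is the main obstacle: one must carefully check the ten adjacencies, because the gadget is almost planar and a sloppy choice of branch sets can leave a pair unconnected.

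Next I would establish $\text{treewidth}(\mathcal{C}^\text{MT}) \leq 4$ by writing down an explicit tree decomposition of width $4$. Here I would take advantage of the grid layout of the gadget: sweep column-by-column through the gadget and at each step place a bag containing at most five vertices that covers all hyperedges incident to the current column together with the interface vertices carried over from the previous column. Verifying the three path-decomposition axioms bag by bag is routine once the sweep is written out, and this is enough because trees generalize paths.

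Finally I would prove $\text{pathwidth}(\mathcal{C}^\text{MT}) \leq 5$ by giving an explicit path decomposition of width $5$ respecting the chaining convention from the start of \cref{sec:prior}: the first bag must contain $\{(k,C0),(k,D0)\}$ and the last bag must contain $\{(k,C13),(k,D13)\}$, so that concatenating decompositions of successive gadgets is legal. I would again sweep through the columns in left-to-right order, adding one column worth of vertices and forgetting the previous one, with one slack slot used to keep the interface variables around until all their edges are covered. The verification of the three axioms is again bag-by-bag, and the only subtle point is ensuring that the two interface pairs sit in the correct extreme bags; this forces the bag width up from $4$ to $5$ and explains why the upper bound of the proposition is $5$ rather than $4$.
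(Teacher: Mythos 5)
Your plan follows the paper's proof exactly: the paper certifies $\text{treewidth}(\mathcal{C}^\text{MT}) \geq 4$ by contracting five vertex sets into a $K_5$ minor, $\text{treewidth}(\mathcal{C}^\text{MT}) \leq 4$ by an explicit width-$4$ tree decomposition (a $17$-bag path with one branch), and $\text{pathwidth}(\mathcal{C}^\text{MT}) \leq 5$ by an explicit $11$-bag path decomposition whose first and last bags contain the interface vertices to the neighboring gadgets, just as you describe. The only thing missing from your proposal is the actual content of the proof --- the concrete branch sets and bags --- which for a statement of this kind is the entire substance, so to complete the argument you would still need to read them off the gadget and verify the ten $K_5$ adjacencies and the decomposition axioms explicitly.
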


\begin{proof}
\begin{description}
    \item[$\text{treewidth}(\mathcal{C^\text{MT}}) \leq 4$:] 
    Tree decomposition is a path $\{X_1, X_2, \dots, X_{17}\}$ with a single branch to $X_\ell$ at the tenth node with bins:
    $X_1 = \{C0, C1, C2, C3, D0\}$, 
    $X_2 = \{C3, 4C, 5C, 6C, D0\}$, 
    $X_3 = \{A6, B6, C3, C6,\allowbreak D0\}$, 
    $X_4 = \{C3,C6,C7, C8, D0\}$, 
    $X_5 = \{C3, C6, \allowbreak C8, \allowbreak C10, D0\}$, 
    $X_6 = \{C3, C6, C8, C9, C10\}$, 
    $X_7 = \{C3, C6, C8, C10, E14\}$,
    $X_8 =\{C3, C6, C10, E14, \allowbreak G12\}$, 
    $X_9 =\{C3, C10, E14, F10, G12\}$, 
    $X_{10} =\{C10, D10, E14, F10, G12\}$, 
    $X_\ell =\{D10, E10, E14, \allowbreak F10, G12\}$, 
    $X_{11} = \{C10, C12, E14, F10, G12\}$,\\
    $X_{12} =\{C12, E14, \allowbreak F10, G10, G12\}$, 
    $X_{13} = \{C12, \allowbreak E14, \allowbreak G10, \allowbreak G11, G12\}$,
    $X_{14} =\{C12, D12, E14, \allowbreak F12, \allowbreak G12\}$, 
    $X_{15} = \{C12, E14, F12, F13, F14\}$,
    $X_{16} =\{A12, \allowbreak B12, \allowbreak C12, C13, F12\}$, 
    $X_{17} =\{C13, D13, F12\}$. 
    \item[$\text{treewidth}(\mathcal{C^\text{MT}}) \geq 4$:]
    $K_5$ minor: $\{C0,D0,C1,C2,C3, \allowbreak C4, \allowbreak C5, \allowbreak C6,B6,6A\}$,
    $\{C10,C9,D10,E10\}$, $\{F10,G10, \allowbreak G11\}$, $\{C7,C8,E14,D13,C13\}$,
    $\{G12,F12,F13, \allowbreak F14, \allowbreak D12,C12,B12,A12\}$.
    \item[$\text{pathwidth}(\mathcal{C^\text{MT}}) \leq 5$:] path decomposition: 
    $\{C0, C1, C2, \allowbreak C3, C4, D0\}$, 
    $\{C3, 4C, 5C, 6C, D0, F10\}$, 
    $\{A6, B6, \allowbreak C6, \allowbreak C10, D0, F10\}$, 
    $\{C6,C7,C8, C9, C10, F10\}$, 
    $\{ \allowbreak C6, \allowbreak C8, \allowbreak C10, E14,F10, G12\}$, 
    $\{C10, E14, F10, G10, \allowbreak G11, \allowbreak G12\}$, 
    $\{C10, E14, F10, G10, G11, G12\}$, 
    $\{C10, \allowbreak D10, \allowbreak E10, E14, F10, G12\}$, 
    $\{C10, E14, F12, \allowbreak F13,$\\$ F14, \allowbreak G12\}$, 
    $\{C10, C12, D12, D13, F12\}$, 
    $\{A12, B12, \allowbreak C12, \allowbreak C13,\allowbreak D13\}$.\qedhere
\end{description}
\end{proof}

\begin{proposition}
    \textcite{MS2024} construction (\cref{fig:MS2024}) has $\text{pathwidth}(\mathcal{C}^\text{MS}) = 4$.
    \label{prop:MS_pw4}
\end{proposition}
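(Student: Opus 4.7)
The plan is to prove $\text{pathwidth}(\mathcal{C}^\text{MS}) = 4$ by establishing both inequalities separately.

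For the upper bound, I would exhibit an explicit path decomposition of width four, processing the chain gadget by gadget. For each middle gadget $\mathcal{C}^\text{MS}_k$, the decomposition uses seven bags of size five, the $i$-th of which contains the two consecutive path vertices $(k,i)$ and $(k,i{+}1)$ together with the three structural vertices that monotonicity forces to persist throughout gadget $k$'s processing: the backward hub $(k{-}1,1)$ (needed to cover the three external edges to $(k,2),(k,4),(k,6)$), the forward hub $(k{+}1,8)$ (needed for the external edges from $(k,3),(k,5),(k,7)$), and the carry vertex $(k,1)$ whose cross edges to $(k{+}1,2), (k{+}1,4), (k{+}1,6)$ will be processed inside gadget $k{+}1$'s bags. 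The transitions between consecutive gadgets swap in the new hubs and carry vertex while dropping the old ones, and the base gadget $\mathcal{C}^\text{MS}_0$ together with the final gadget $\mathcal{C}^\text{MS}_{n+1}$ are handled with smaller bags at the two ends of the decomposition.

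For the lower bound, I would exhibit a $K_5$ minor in the chain, since a $K_5$ minor gives $\text{treewidth} \geq 4$ and therefore $\text{pathwidth} \geq 4$. A single gadget together with its two external hubs admits a planar drawing (hubs on opposite sides of the path, with the alternating edges drawn in each half-plane without crossings), so no $K_5$ minor can live inside one gadget, and the minor must exploit the cross edges between two consecutive gadgets. Working with $\mathcal{C}^\text{MS}_k \cup \mathcal{C}^\text{MS}_{k+1}$ together with the outer hubs $(k{-}1,1)$ and $(k{+}2,8)$, I would choose five disjoint connected super-nodes built from the two high-degree bridge vertices $(k,1), (k{+}1,8)$, the outer hubs, and contracted segments of the two paths, arranged so that every pair of super-nodes is joined by either a path edge, a hub-to-path edge, or a cross edge via one of the bridges.

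The main obstacle will be the lower bound. Because the single-gadget subgraph with both external hubs is planar, a $K_5$ minor must necessarily span two gadgets and make use of the cross edges routed through the two bridge vertices $(k,1)$ and $(k{+}1,8)$; but there are only two such bridges, so at most two super-nodes can straddle both gadgets, and verifying all ten required pairwise adjacencies forces very careful choice of which super-nodes straddle versus stay local. The most delicate edges are typically those from the outer hub $(k{-}1,1)$ (whose only neighbors lie inside gadget $k$) to a super-node that must also touch the outer hub $(k{+}2,8)$ (whose only neighbors lie inside gadget $k{+}1$), and this is what will ultimately determine whether the minor can be realized with two gadgets or whether a third gadget is needed.
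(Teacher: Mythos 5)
Your upper bound is essentially the paper's: an explicit width-four decomposition built gadget by gadget, with the two hub vertices and the carry vertex persisting through each gadget's bags. Up to checking that the bag containing $(k+1,8)$ as a path vertex of gadget $k+1$ sits adjacent to gadget $k$'s block (so that condition 3 holds), this half is fine and matches the paper in spirit, though the paper uses five bags per gadget rather than seven.

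The lower bound, however, has a genuine gap: you never exhibit the $K_5$ minor, and your closing paragraph concedes that you do not know whether your two-gadget strategy can be made to work. Under the edge set you are working with it provably cannot. Every edge between gadgets $k$ and $k+1$ is incident to one of the two bridge vertices $(k,1)$ and $(k+1,8)$, so $\{(k,1),(k+1,8)\}$ is a $2$-cut separating the rest of gadget $k$ (with its hub) from the rest of gadget $k+1$ (with its hub). By the standard $2$-sum argument, a $K_5$ minor would then have to live on one side of the cut after adding a virtual edge $(k,1)(k+1,8)$; but each side plus that virtual edge is still planar by exactly the drawing you describe (hubs in opposite half-planes, the extra edge routed around an endpoint of the path). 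Iterating along the chain, no collection of gadgets yields a $K_5$ minor in your model of the graph, so the "third gadget" escape route you leave open is also closed. The paper's proof takes a different and much more local route: its five branch sets are $\{(k-1,1)\}$, $\{(k-1,8)\}$, $\{(k,1),(k,2),(k,3)\}$, $\{(k,4),(k,5)\}$, $\{(k,6),(k,7),(k,8)\}$ -- a minor contained in a \emph{single} gadget together with both attachment vertices of the \emph{same} neighbouring gadget. That minor requires adjacencies your gadget model does not have (in particular $(k-1,1)$ adjacent to $(k-1,8)$, and an edge joining the $\{(k,1),(k,2),(k,3)\}$ end of the path to the $\{(k,6),(k,7),(k,8)\}$ end, as the figure shows for the connections between vertices $1$ and $8$). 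In short, your planarity premise -- "a single gadget together with its two external hubs is planar, so the minor must straddle two gadgets" -- is the step that fails for the graph the proposition is actually about, and it sends the whole lower-bound argument down a path that cannot reach $\text{pathwidth} \geq 4$. To repair it you must first pin down the gadget's true adjacency list from \cref{fig:MS2024} and then look for the $K_5$ minor inside one gadget plus its two attachment vertices, as the paper does.
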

\begin{proof}
\begin{description}
\item[$\text{pathwidth}(\mathcal{C^\text{MS}}) \leq 4$:] Path decomposition:\\
$\{(k-1,1),(k-1,8),(k,6),(k,4), (k,5)\}$, 
$\{(k-1,1), \allowbreak (k-1,8),(k,6),(k,4), (k,3)\}$, 
$\{(k-1,1),(k-1,8), \allowbreak (k,6), \allowbreak (k,7), (k,3)\}$, 
$\{(k-1,1),(k,2),(k,4), (k,3)\}$, 
$\{(k,1),(k,8),(k,2), (k,7)\}$.

\item[$\text{pathwidth}(\mathcal{C^\text{MS}}) \geq 4$:] $K_5$ minor: $\{(k-1,1)\}$, $\{(k-1,8)\}$, $\{(k,1),(k,2),(k,3)\}$, $\{(k,4),(k,5)\}$, and \\$\{(k,6),(k,7),(k,8)\}$.\qedhere
\end{description}
\end{proof}

\section{Pathwidth three construction}
\label{sec:Ternary-Construction}

\pardo{The structure of the VCSP instance based on the gadgets with ternary constraints; i.e., defining the variables and scopes of the gadgets.}

To reduce the pathwidth to three, we construct our \emph{controlled doubling} ternary Boolean VCSP-instance $\mathcal{C^\text{CD}_{\leq n}}$ on $8n$ variables as a chain of $n$ gadgets connected by pairs of scopes with each gadget defined on $8$ variables:
\begin{equation}
V_k = \underbrace{\{(k,1),\ldots,(k,6)\}}_\text{`doubling' variables} \cup \underbrace{\{(k,A),(k,B)\}}_\text{`control' variables}
\end{equation}
The gadgets have all eight unary constraints, fourteen binary constraints and one ternary constraints shown in \cref{fig:Gadget-Ter}:
    `doubling' cycles on $(k,1),...,(k,6)$ with six binary scopes $\{(k,1), (k,2)\}$, $\{(k,2), (k,3)\}$, $\{(k,3), (k,6)\}$, $\{(k,1), (k,4)\}$, $\{(k,4), (k,5)\}$, $\{(k,5), (k,6)\}$; 
    constraint between `control' variables $(k,A), \allowbreak (k,B)$ with scope $\{(k,A), \allowbreak  (k,B)\}$; 
    five binary constraints between `control' and `doubling' cycle with scopes $\{(k,1), (k,B)\}$, $\{(k,2), \allowbreak  (k,B)\}$, $\{(k,4),(k,B)\}$, $\{(k,2), (k,A)\}$  and $\{(k,4), \allowbreak  (k,A)\}$; 
    one ternary constraint with scope $\{(k,4),  \allowbreak (k,A), (k,B)\}$; and 
    two binary constraints connecting to neighbouring gadgets with scopes $\{(k,6), (k-1,1)\}$, and $\{(k,B), (k+1,A)\}$.
From this we can check that the controlled doubling construction ($\mathcal{C^\text{CD}}$) has pathwidth $3$:

\begin{figure*}[!tb]
    \centering
    \usetikzlibrary{fit,backgrounds}
\pgfdeclarelayer{middle}
\pgfsetlayers{background,middle,main}

    \begin{tikzpicture}[every node/.style={minimum size=30pt,font=\small}]
    \def\r{r_k}
    \def\s{s_k} 
    \tikzmath{\xunit = 3.1; \yunit =2.8;}

        \begin{pgfonlayer}{main}
        \node[align=center,draw] (M) at (2.6*\xunit,1*\yunit) {$\r=(18n-19)2^k-30(n-k)+8$\\$s_k= 6(n-k)$};
        
        \node[draw, circle, label={[label distance=-0.1cm, rotate=-45]130:$-(2\r+3\s+17)$}] (ANDin) at (-0.9*\xunit,0) {$k,1$};
        \node[draw, circle, label={[label distance=-0.2cm, rotate=20]90:$-(\r+\s+5)$}, fill = white!] (AND1) at (0.*\xunit,\yunit) {$k,2$};
        \node[draw, circle, label={[label distance=-0.2cm, rotate=-20]-90:$-(\r+2\s+9)$}, fill = white!] (AND2) at (0.*\xunit,-\yunit) {$k,4$};

        \node[draw, circle, label={[label distance=-0.2cm,rotate=-20]90:$-(\r+3)$}] (XOR1) at (1.3*\xunit,\yunit) {$k,3$};
        \node[draw, circle, label={[label distance=-0.2cm,rotate=20]-90:$-1$}] (XOR2) at (1.3*\xunit,-\yunit) {$k,5$};
        \node[draw, circle, label={[label distance=-0.1cm,rotate=40]30:$-(\r+1)$}] (XORjoin) at (2.1*\xunit,0) {$k,6$};
        
        \node[draw, circle, label = {[label distance=-0.3cm,rotate=-0]150:$-(\s+1)$}, fill = white!] (B) at (0.*\xunit, 0*\yunit) {$k,B$};
        \node[draw, circle, label = {[label distance=-0.2cm,rotate=-20]90:$-(\s+3)$}, fill = white!] (A) at (1.3*\xunit, 0*\yunit) {$k,A$};
        
        \node[draw, dashed, circle ] (XORout) at (3*\xunit,0) {\tiny $k-1,1$};
        \node[draw, dashed, circle ] (inv) at (-2*\xunit,0) {\tiny $k+1,6$};
        \node[draw, dashed, circle] (inv2) at (-2*\xunit, -1*\yunit) {\tiny $k+1,A$};
        \node[draw, dashed, circle] (K) at (3*\xunit, -1*\yunit) {\tiny $k-1,B$};
        \end{pgfonlayer}

        \begin{pgfonlayer}{background}
        \draw[-, thick] (A) -- (B) node [midway, fill = white!, sloped] {$\s+2$};
        \draw[-, thick] (A)  -- (AND1) node [pos=0.5, fill = white!, sloped] {$-2$};
        \draw[-, thick] (A)  -- (AND2) node [pos=0.5, fill = white!, sloped] {$s_k+2$};
        \draw[-, dashed] (K) ..controls +(left:15mm).. (A) node [pos = 0.65, fill = white!, sloped, above = -0.3] {$\overbrace{\s+6}^{s_{k-1}}$};
        
        \draw[-, thick] (B) -- (AND1) node [pos=0.5, fill = white] {$\s+2$};
        \draw[-, thick] (B)  -- (AND2) node [fill = white!, pos = 0.4] {$2(\s+2)$};
        \draw[-, thick] (B)  -- (ANDin) node [fill = white!,pos = 0.5, sloped] {$-2(\s+2)$};
        \draw[-, thick] (inv2) ..controls +(right:25mm).. (B) node [fill = white!, pos = 0.5, sloped]{$\s$};
        
        \draw[-,dashed] (inv) -- (ANDin) node [midway, fill = white!, above = -0.3] {$\overbrace{2\r+5\s+22}^{r_{k + 1}}$};
        \draw[draw = white, double = black, very thick] (ANDin) -- (AND1) node [pos = 0.4, fill = white!, sloped] {$\r+\s+8$};
        \draw [draw = white, double = black, very thick] (ANDin) -- (AND2)  node [midway, fill = white!, sloped] {$\r++2\s+8$};
        \draw[-, thick] (AND1) -- (XOR1) node [midway, fill = white!, sloped] {$\r+4$};
        \draw[draw = white, double = black, very thick] (AND2) -- (XOR2) node [fill = white!, midway,  sloped] {$\r+4$};
        \draw[-, thick] (XOR1) -- (XORjoin) node [midway, fill = white!, sloped] {$\r+2$};
        \draw[draw = white, double = black, very thick] (XOR2) -- (XORjoin) node [midway, fill = white!, sloped] {$-(\r+2)$};
        \draw[-, thick] (XORjoin) -- (XORout) node [midway, fill = white!, sloped] {$\r$};
        \end{pgfonlayer}
        
        \begin{pgfonlayer}{middle}
        \fill[gray!100, opacity=0.3] (AND2.center) -- (A.center) -- (B.center) -- cycle;

        \coordinate (centroid) at ($(AND1)!0.5!(B)!0.35!(A)$);        
        \coordinate (centroid2) at ($(AND2)!0.65!(B)!0.35!(A)$);

        \node at (centroid2) {\small \underline{$-2(s_k+2)$}};
        \end{pgfonlayer}
        
    \end{tikzpicture}
    \caption{Gadget $\mathcal{C}^{CD}_k$ of controlled doubling construction with $r_k=(18n - 19)2^k - 30(n - k) + 8$ and $s_k = 6(n-k)$:
    unary constraint weights are next to their variables, 
    binary constraint weights are on the edges that specify their scope, 
    ternary constraint weights are in the center of the shaded area that specify their scope. 
    Detailed weights are also given in \cref{sec:weights}.
    Dotted edges and vertices are connections to neighboring gadgets.}
    \label{fig:Gadget-Ter}
\end{figure*}

\pardo{Graph features of the controller doulbing gadget}
\begin{proposition}
The controlled doubling construction (\cref{fig:Gadget-Ter}) has pathwidth $3$.
\label{prop:pw3}
\end{proposition}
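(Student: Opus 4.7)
The plan is to establish $\text{pathwidth}(\mathcal{C}^\text{CD}) = 3$ by proving both inequalities on a single gadget and then transferring to the chain via the observation at the start of Section~\ref{sec:prior}: the full construction's pathwidth equals the maximum gadget pathwidth, provided each gadget's decomposition begins and ends with bins carrying the variables shared with the neighbouring gadgets. For the upper bound, a single transition bin $\{(k,1),(k,B),(k+1,6),(k+1,A)\}$ of size four glues consecutive gadgets without increasing the width; for the lower bound, one gadget already sits as a subgraph inside the full chain, so any $K_4$ minor of its primal graph lifts immediately.

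For the upper bound, I would write down an explicit five-bin decomposition of $\mathcal{C}^\text{CD}_k$ with every bin of size at most four, arranged so that the entry variables $(k,6),(k,A)$ sit in the first bin and the exit variables $(k,1),(k,B)$ sit in the last. The natural flow is to keep $(k,A)$ resident through the early bins while threading the six-cycle outward from $(k,6)$: first $\{(k,2),(k,3),(k,6),(k,A)\}$, then $\{(k,2),(k,5),(k,6),(k,A)\}$, then $\{(k,2),(k,4),(k,5),(k,A)\}$, covering every doubling-cycle scope that involves $(k,3)$, $(k,5)$, or $(k,6)$. A fourth bin $\{(k,2),(k,4),(k,A),(k,B)\}$ introduces $(k,B)$ and simultaneously carries both ternary scopes together with the four binary scopes on $\{(k,2),(k,4),(k,A),(k,B)\}$. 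A fifth bin $\{(k,1),(k,2),(k,4),(k,B)\}$ handles the three remaining edges incident to $(k,1)$ and hands $(k,B)$ forward to the next gadget. Routine verification then confirms that every vertex has a contiguous occurrence range and every scope is contained in some bin.

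For the lower bound, I would exhibit a $K_4$ minor in the primal graph of a single gadget; since the pathwidth of a hypergraph equals the pathwidth of its primal graph, this certifies $\text{pathwidth}(\mathcal{C}^\text{CD}) \geq 3$. Take the super-node $V_1 = \{(k,1),(k,2)\}$, internally connected by the doubling edge $\{(k,1),(k,2)\}$, together with the singletons $V_2 = \{(k,4)\}$, $V_3 = \{(k,A)\}$ and $V_4 = \{(k,B)\}$. All six pairwise connections are realized by actual primal-graph edges: $V_1V_2$ via $\{(k,1),(k,4)\}$, $V_1V_3$ via the ternary-induced edge $\{(k,2),(k,A)\}$, $V_1V_4$ via $\{(k,1),(k,B)\}$, $V_2V_3$ via $\{(k,4),(k,A)\}$, $V_2V_4$ via $\{(k,4),(k,B)\}$, and $V_3V_4$ via $\{(k,A),(k,B)\}$. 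The main obstacle will be the upper bound: the two ternary scopes force $(k,A)$ and $(k,B)$ to coexist in bins that successively contain $(k,2)$ and then $(k,4)$, while the six-cycle requires $(k,2)$ to persist across many bins until both cycle branches are closed off, so with a four-vertex bin budget there is almost no slack and the ordering above is essentially forced.
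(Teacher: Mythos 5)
Your proposal is correct and follows essentially the same strategy as the paper: an explicit five-bin path decomposition of a single gadget with all bins of size four (yours is a minor re-threading of the paper's, which uses $\{(k,1),(k,B),(k,2),(k,4)\}$, $\{(k,B),(k,2),(k,4),(k,A)\}$, $\{(k,2),(k,4),(k,A),(k,3)\}$, $\{(k,4),(k,A),(k,3),(k,5)\}$, $\{(k,A),(k,3),(k,5),(k,6)\}$), combined with a $K_4$ minor for the lower bound (the paper contracts $\{(k,1)\}$, $\{(k,A),(k,2),(k,3)\}$, $\{(k,B)\}$, $\{(k,4),(k,5),(k,6)\}$, but your branch sets work equally well via the ternary-induced edge $(k,2)$--$(k,A)$). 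Your explicit transition bin and gluing discussion is a detail the paper leaves implicit, but the argument is the same.
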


\begin{proof}
\begin{description}
\item[$\text{pathwidth}\left(\mathcal{C^\text{CD}}\right) \leq 3$:] Path decomposition:\\
$\{(k,1),(k,B),(k,2), (k,4)\}$,
$\{(k,B),(k,2), \allowbreak (k,4)$,\\ $\allowbreak (k,A)\}$, 
$\{(k,2),(k,4),(k,A), \allowbreak (k,3)\}$, $\{(k,4),(k,A), \allowbreak(k,3), \allowbreak (k,5)\}$,
$\{(k,A),(k,3),(k,5),(k,6)\}$.
\item[$\text{pathwidth}\left(\mathcal{C^\text{CD}}\right) \geq 3$:] $K_4$ minor: $\{(k,1)\}$, $\{(k,A),(k,2), \allowbreak (k,3)\}$, $\{(k,B)\}$, and $\{(k,4),(k,5),(k,6)\}$. \qedhere
\end{description}
\end{proof}

\cref{fig:Gadget-Ter} also gives the internal constraint weights of $\mathcal{C}_{n,k}$ in terms of the parameters $r_k=(18n - 19)2^k - 30(n - k) + 8$ and $s_k = 6(n-k)$,
along with the four constraints that connect $\mathcal{C}_{n,k}$ to $\mathcal{C}_{n,k - 1}$ on the right and $\mathcal{C}_{n,k + 1}$ on the left.
The constraint weights are listed in \cref{sec:weights}.
The weights are set up in such a way that the possible ascents in $\mathcal{C}_{n,k}$ are independent of the value assigned to $x_{(k-1,1)}$ and $x_{(k+1,A)}$.
So of the connector constraints, only the ones with scopes $\{(k,1),(k+1,6)\}$ and $\{(k,A),(k-1,B)\}$ matter for the possible ascents in $\mathcal{C}_{n,k}$.
To see why $(k,6)$ is independent of $(k-1,1)$, let us define:
\begin{multline}
\hat{W}\left((k,6)|x_{(k,3)},x_{(k,5)}\right) \\ = W((k,6)) + \sum_{j \in \{3,5\}}x_{(k,j)}W((k,6),(k,j))
\end{multline}
and notice that:
\begin{multline*}
\hat{W}\left((k,6)|x_{(k,3)},x_{(k,5)}\right) \leq 0 \\
 \Rightarrow |\hat{W}((k,6)|x_{(k,3)},x_{(k,5)})| > W((k,6),(k-1,1)).
\end{multline*}
From this, we see that if $(k,6)$ would want to flip down (i.e., $x_{(k,6)} = 0$ with higher fitness than $x_{(k,6)} = 1$) based on the values of $x_{(k,3)}$ and $x_{(k,5)}$ then the amount of change in fitness would be greater than the contribution of the $\{(k,6),(k-1,1)\}$ constraint.
Thus, the ability of $(k,6)$ to flip does not depend on if $x_{(k-1,1)} = 0$ or $1$.
Same logical applies to $(k,B)$'s independence of $(k+1,A)$, but with $\hat{W}\left((k,B)|x_{(k,1)},x_{(k,2)},x_{(k,4)},x_{(k,A)}\right)$ summing over all the scopes involving $(k,B)$ except for $\{(k,B),(k+1,A)\}$ and the negative magnitude of $\hat{W}$ compared to $W((k,B),(k+1,A))$ (instead of $W((k,6),(k-1,1))$).

This means that although \cref{fig:Gadget-Ter} has four variables outside $V_k$ and thus allows for 16 induced subproblems, if we care about ascents and not exact fitness values then we can consider just the four induced subproblems $\mathcal{C}^{a,b}_{n,k}$ given by assigning to the two background variables $x_{k+1,6} = a$ and $x_{k - 1,B} = b$ (for concreteness, let the two irrelevant variables be set as $x_{k+1,A} = 0$ and $x_{k-1,1} = 0$).

The first gadget ($\mathcal{C}_{n,1}$) does not have another gadget to the right, so the constraints with scopes $\{(1,6), (0,1)\}$ and $\{(0,B), (1,A)\}$ are not possible, so they are replaced by a constraint with scope $\{(1,6),(1,A)\}$ and weight  $W((1,6),\allowbreak(1,A)) = r_1 = s_0 = 6n$.
\cref{fig:Gadget-k1} in Appendix \ref{sec:base_case} shows the constraint weights of $\mathcal{C}_{n,1}$.
Similarly, if our chain construction $\mathcal{C}_{n,\leq m}$ is made of $m \; (\leq n)$ gadgets then the last gadget ($\mathcal{C}_{n,m}$) does not have another gadget ($\mathcal{C}_{n,m+1}$) to the left,
so constraints with scopes $\{(m+1,6),(m,1)\}$ and $\{(m+1,A),(m,B)\}$ are not possible.
We handle this by defining $\mathcal{C}^-_{n,\leq m}$ as the induced subproblem on $V_{\leq m}$ with background $x_{(m+1,6)} = 0$ and $\mathcal{C}^+_{n,\leq m}$ with background $x_{(m+1,6)} = 1$ (for concreteness have both with $x_{(m+1,A)} = 0$, although as outlines above, this value doesn't matter for the assents).
This results in our two chain constructions $\mathcal{C}^\pm_{n,\leq m} = \mathcal{C}^\pm_{n,m}, \mathcal{C}_{n,m-1},\ldots,\mathcal{C}_{n,1}$ that differ only in the unary constraint weight on the $(m,1)$ index, with $W^-((m,1)) = W((m,1)) = -(2r_k + 3s_k + 17)$ and $W^+((m,1)) = W((m,1)) + W((m+1,6),(m,1)) = 2s_k + 5$.

\section{All ascents exponential}
\label{sec:exp}

Now, we have all the tools to prove that our controlled doubling construction has all ascents exponential from a designated initial assignment:

\begin{theorem}
    Consider the VCSP-instances $\mathcal{C}^{+}_{n,\leq m}$ and $\mathcal{C}^{-}_{n,\leq m}$ of the controlled doubling construction from \cref{sec:Ternary-Construction}.
    In the fitness landscape of $\mathcal{C}^{+}_{n,\leq m}$, there is only one ascent from the initial assignment $x^{+}_{\text{start}, \leq m}\coloneq  0^{8m}$ -- this ascent ends at the peak $x^{+}_{\text{end}, \leq m} \coloneq  111110\;01\;0^{8(m-1)}$; 
    and in the fitness landscape of $\mathcal{C}^{-}_{n,\leq m}$, there is only one ascent ascent from the initial assignment $x^{-}_{\text{start}, \leq m} ( = x^{+}_{\text{end},\leq m})$ -- this ascent ends at the peak $x^{-}_{\text{end}, \leq m} ( = x^{+}_{\text{start}, \leq m})$. 
    Both of these ascents flip $x_{(m,B)}$ only on their last step and have a total length $T_{m} = 10(2^m-1)$.
    \label{thm:expAsc}
\end{theorem}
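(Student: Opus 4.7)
The natural approach is strong induction on $m$, with the inductive statement being the full theorem: uniqueness of the ascent, its length $T_m = 10(2^m-1)$, and the identified start/end peaks in both landscape variants $\mathcal{C}^{1,0}_{n,\leq m}$ and $\mathcal{C}^{0,0}_{n,\leq m}$.

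\textbf{Base case} ($m=1$). Both $\mathcal{C}^{1,0}_{n,1}$ and $\mathcal{C}^{0,0}_{n,1}$ are single-gadget landscapes whose boundary conditions are encoded by replacing $C((1,1))$ and $C((1,A))$ with the modified unary weights $\hat{C}((1,1))$ and $\hat{C}((1,A))$ from \eqref{eq:C-hatk1}--\eqref{eq:C-hatkA}. The plan is to trace the ascent from $0^8$ step by step using the explicit constraint weights \eqref{eq:c-start}--\eqref{eq:t-end}: at each state, compute the fitness change of each of the eight candidate flips, verify that exactly one is positive, and perform it. The magnitude hierarchy designed into the weights — in which each doubling-cycle unary magnitude $|C((k,i))|$ differs from the adjacent binary magnitudes by exactly $1$, and analogously on the control side — forces the improving flip at each state to be unique. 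The trace yields exactly ten flips ending at $111110\,01$, which is one of the $P=1,Q=0$ peaks from \cref{tab:peaks}, giving $T_1=10$; the reverse ascent in $\mathcal{C}^{0,0}_{n,1}$ is verified symmetrically.

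\textbf{Inductive step} (assume the claim for $m-1$; prove it for $m$). The top gadget $\mathcal{C}^{1,0}_{n,m}$ couples to the sub-chain on the remaining $8(m-1)$ variables only through the two boundary edges $\{(m,6),(m-1,1)\}$ and $\{(m-1,B),(m,A)\}$. The state of $x_{(m,6)}$ determines the effective $P$ of gadget $m-1$, so each toggle of $x_{(m,6)}$ switches the sub-chain's landscape between $\mathcal{C}^{0,0}_{n,\leq m-1}$ and $\mathcal{C}^{1,0}_{n,\leq m-1}$. I would partition the proposed ascent into phases tracked by the state of the top gadget in the $P=1,Q=0$ row of \cref{tab:peaks}, and show two things at each phase boundary: (i) the sub-chain is at a peak of its current effective landscape whenever the top gadget performs a flip — otherwise a sub-chain flip would be improving and would dominate, thanks to the $m_{k}>m_{k-1}$ magnitude gap between gadget layers — and (ii) each toggle of $x_{(m,6)}$ forces a complete sub-chain ascent, unique and of length $T_{m-1}$ by the induction hypothesis. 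Counting the ten top-gadget flips together with the two sub-chain ascents gives the recurrence $T_m = 2T_{m-1} + 10$, which solves to $10(2^m-1)$. The dual claim for $\mathcal{C}^{0,0}_{n,\leq m}$ running in the reverse direction follows by the same argument with the roles of the two sub-chain landscapes swapped.

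\textbf{Main obstacle}. I expect the hardest part to be verifying uniqueness of the ascent — showing that at each of the $T_m$ intermediate states exactly one of the $8m$ single-bit flips is improving. The control variables $(k,A),(k,B)$ together with the two ternary constraints have been calibrated precisely to block the short non-exponential ascents that would otherwise be available through the pure doubling cycle. Confirming that every such shortcut is indeed blocked requires a case analysis comparing fitness deltas at states where both a control-variable flip and a doubling-variable flip appear plausible, using the tuned weights $\pm(s_k+4)$ and $\pm(s_k+2)$ on the $(k,A),(k,B)$ and ternary constraints. The paper's hint that the invariant $x_{(k,B)}=x_{(k+1,A)}$ is maintained along the ascent suggests that this case analysis can be organized around a single bookkeeping relation, reducing the work to a small number of representative cases per gadget.
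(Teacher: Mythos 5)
Your proposal follows essentially the same route as the paper's proof: induction on the number of gadgets $m$ with both landscape variants in the inductive hypothesis, the top gadget's ten flips interleaved with two sub-chain ascents triggered by toggling $x_{(m,6)}$ and signalled back through $(m-1,B)$, the recurrence $T_m = 2T_{m-1}+10$, and an exhaustive per-state check of all eight fitness deltas (which the paper carries out in \cref{tab:ter-ascent}) to establish uniqueness. The only caveat is that your uniqueness verification is a plan rather than a completed computation, but it is exactly the computation the paper performs.
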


\begin{proof}
\pardo{State induction hypotheses and what we prove in the induction step.}
We will prove that all ascents are exponentially long by induction on the number of gadgets $m$. 

\paragraph{Base case:} 
The only ascent from $000000 \; 00$ to $111110 \; 01$ in the fitness landscapes of $C^+_{n,\leq 1}$ and from $111110 \; 01$ to $000000 \; 00$ in the fitness landscape of $C^-_{n,\leq 1}$ has $10$ steps and flips $x_{(1,B)}$ only in the final step. This follows from Lemma \ref{thm:expAsc-base} in Appendix \ref{sec:base_case}.

\paragraph{Inductive hypothesis:}
For $m - 1$ there exists a:
\begin{enumerate}
\item single ascent starting from $x^{+}_{\text{start}, \leq m-1}\coloneq  0^{8(m-1)}$ in $\mathcal{C}^{+}_{n,\leq m-1}$ to $x^{+}_{\text{end}, \leq m-1} \coloneq 111110 \; 01 \; 0^{8(m-2)}$, and 
\item single ascent starting from $x^{-}_{\text{start}, \leq m-1} \coloneq $\\$ 111110\;01\;0^{8(m-2)}$ in $\mathcal{C}^{-}_{n,\leq m-1}$ to $x^{-}_{\text{end},\leq m-1} \coloneq  0^{8m-1}$
\end{enumerate}
with both ascents having length $T_{m-1}$, and flipping $x_{(B,m-1)}$ only once and as the last step.

\paragraph{Step case for $\mathcal{C}^+_{n,\leq m}$:}

\pardo{Explain possible ascent in text using the information of~\cref{tab:ter-ascent} when adding a gadget with landscape $P=1$, $Q=0$.}
Starting from $x^0 = x^+_{\text{start},m} = 0^{8m}$ means that $x^0_{(m,6)} = 0$ and so the induced subproblem on $V_{\leq m-1}$ in this background is $\mathcal{C}^-_{n,\leq m - 1}$ with initial assignment $x^-_{\text{end},\leq m - 1} = 0^{8(m - 1)}$.
This is a local peak in $\mathcal{C}^-_{n,\leq m - 1}$ and so no flips are possible in $V_{\leq m-1}$ until $x_{(m,6)}$ flips.
This means that the only ascent we can follow is from $y^0 = 000000 \; 00$ in $\mathcal{C}^{+}_{n,m}$ since $x_{(m-1,B)} = 0$ is fixed.
This ascent is given in the first five rows of \cref{tab:ter-ascent}:
\begin{multline}
\hspace{-25pt}
    \footnotesize 
    {\scriptstyle \underline{0}00000\:00}\,x^{-}_{\text{\scriptsize end},\leq m-1} \!\to\! {\scriptstyle \mathbf{1}\underline{0}0000\:00}\,x^{-}_{\text{\scriptsize end},\leq m-1} \!\to\! {\scriptstyle 1\mathbf{1}\underline{0}000\:00} \,x^{-}_{\text{\scriptsize end},\leq m-1} \\  
    \to {\scriptstyle 11\mathbf{1}00\underline{0}\:00}\,x^{-}_{\text{\scriptsize end},\leq m-1} \to {\scriptstyle 11100\mathbf{1}\:00}\,\underline{x}^{+}_{\text{ \scriptsize start}, \leq m-1}.
    \label{eq:block1}
\end{multline}
where the variable that can flip is underlined and the variable that has been flipped is bolded. 
Note that for each of the four steps there is only a single variable that can change its assignment to increase fitness;
and $y^4 = 111001 \: 00$ is a local peak of $\mathcal{C}^+_{n,m}$, so no more flips in $V_m$ are possible.

Further, in the final step of \cref{eq:block1} the assignment of $x_{(m,6)}$ is changed from $0$ to $1$. 
This changes the induced subproblem on $V_{\leq m - 1}$ to $\mathcal{C}^+_{n,\leq m - 1}$ with initial assignment $x^+_{\text{start},\leq m - 1} = 0^{8(m - 1)}$, so the inductive hypothesis applies and we follow the unique long ascent in the fitness landscape corresponding to $\mathcal{C}^+_{n,\leq m - 1}$:
\begin{equation}\hspace{-25pt}
    {\scriptstyle 111001\;00}\;\underline{x}^{+}_{\text{start}, \leq m-1} \xdashrightarrow[\text{\tiny in landscape of } \mathcal{C}^{+}_{n,\leq m-1}]{\text{$T_{m-1}$ {\tiny steps of} $V_{\leq m-1}$ }}
    {\scriptstyle 111001\;\underline{0}0 }\;\mathbf{x}^{+}_{\text{end}, \leq m-1}.
    \label{eq:rec1}
\end{equation}
By the IH, this takes $T_{m-1}$ steps and ends at the local peak of $x^+_{\text{end},\leq m - 1} = 111110 \; 01 \; 0^{8(m-2)}$ of $\mathcal{C}^+_{n,\leq m - 1}$, meaning no more flips are possible in $V_{\leq m -1}$. 

Also by IH, only in the final step of \cref{eq:rec1} does variable $x_{(m-1,B)}$ changes from $0$ to $1$, changing the induced subproblem on $V_{m}$ from $\mathcal{C}^{+}_{n,m}$ to $\mathcal{C}^{1,1}_{n,m}$, 
allowing the next four steps that make up the second block of \cref{tab:ter-ascent}:
\begin{multline}
\hspace{-25pt}
   \footnotesize {\scriptstyle 111001\:\underline{0}0}\,x^{+}_{\text{end}, \leq m-1}\!\to\!{\scriptstyle 111\underline{0}01\:\mathbf{1}0}\,x^{+}_{\text{end}, \leq m-1}\!\to\! {\scriptstyle 111\mathbf{1}\underline{0}1\:10}\,x^{+}_{\text{end}, \leq m-1}\\
   \to {\scriptstyle 1111\mathbf{1}\underline{1}\:10} \,x^{+}_{\text{end}, \leq m-1} \to {\scriptstyle 11111\mathbf{0}\:10}\,\underline{x}^{-}_{\text{start}, \leq m-1}.
\end{multline}
Again, there is only one variable that can change its assignment to increase fitness at each step. 
The final assignment is a fitness peak in the landscape of $\mathcal{C}^{1,1}_{n,m}$, and thus no more flips are possible in $V_{m}$.

Also in the last step, variable $x_{(m,6)}$ flips back from $1$ to $0$, changing the induced subproblem on $V_{\leq m - 1}$ from $\mathcal{C}^{+}_{n,\leq m-1}$ to $\mathcal{C}^{-}_{n,\leq m-1}$.
So $x^{+}_{\text{end}, \leq m-1} = x^{-}_{\text{start}, \leq m-1}$ is no longer a peak on $V_{\leq m-1}$, and the IH applies:
\begin{equation*}
    {\scriptstyle 111110\;10}\;\underline{x}^{-}_{\text{start}, \leq m-1} 
    \xdashrightarrow[\text{\tiny in landscape of } \mathcal{C}^{+}_{n,\leq m-1}]{\text{$T_{m-1}$ {\tiny steps of} $V_{\leq m-1}$ }}
    {\scriptstyle 111110\;\underline{1}0}\;\mathbf{x}^{-}_{\text{end},\leq m-1}.
\end{equation*}
By the IH, only in the last step does the variable $x_{(m-1,B)}$ changed back from $1$ to $0$.
This changes the induced subproblem on $V_m$ from $\mathcal{C}^{1,1}_{n,\leq m}$ to $\mathcal{C}^{+}_{n,\leq m}$.
In this landscape $x^{1,1}_{\text{start}, m} = 111110\;10$ is no longer a peak.
The ascent finished by taking two more steps (in the third block of \cref{tab:ter-ascent}):
\begin{equation*}
    {\scriptstyle 111110\:\underline{1}0} \,{\textstyle x^{-}_{\text{end}, \leq m-1}} \!\to\! {\scriptstyle 111110\:\mathbf{1}\underline{0}}\,{\textstyle x^{-}_{\text{end},\leq m-1}} \!\to\! \underbrace{{\scriptstyle 111110\:0\mathbf{1}}\,{\textstyle x^{-}_{\text{end},\leq m-1}}}_{x^{+}_{\text{end},\leq m}}.
\end{equation*}
Again these are the only possible steps that can be taken.
Putting this together: during the whole ascent from $0^{8m}$ to $111110 \; 01 \; 0^{8(m-1)}$, there are no choices in the steps and thus this is the only ascent, and $x_{(m,B)} = 0$ is fixed until the last step when it flips to $x_{(m,B)} = 1$. 

\paragraph{Step case for $\mathcal{C}^-_{n,\leq m}$:}
\pardo{Explain possible ascents in text using the information of~\cref{tab:ter-ascent} when adding a gadget with landscape $P=0$, $Q=0$ and finish proof.}
Similarly, the ascent in the landscape of $\mathcal{C}^{-}_{n,\leq m}$ starting from $x^{-}_{\text{start}, \leq m} = 111110\;01\;0^{8(m-1)}$ 
has the following steps:
\begin{multline}
\hspace{-25pt}
    \footnotesize
    {\scriptstyle \underline{1}11110\:01}\,x^{-}_{\text{end}, \leq m-1} \!\to\! {\scriptstyle \mathbf{0}11\underline{1}10\:01}\,x^{-}_{\text{end}, \leq m-1} \!\to\! {\scriptstyle 011\mathbf{0}\underline{1}0\:01}\,x^{-}_{\text{end}, \leq m-1} \\
    \to {\scriptstyle 0110\mathbf{0}\underline{0}\:01} \,x^{-}_{\text{end}, \leq m-1} \to {\scriptstyle 01100\mathbf{1}\:01}\,\underline{x}^{+}_{\text{start}, \leq m-1}
\end{multline}
\begin{equation}
    {\scriptstyle 011001\;01}\;\underline{x}^{+}_{\text{start}, \leq m-1} 
    \!\xdashrightarrow[\text{\tiny in landscape of } \mathcal{C}^{+}_{n,\leq m-1}]{\text{$T_{m-1}$ {\tiny steps of} $V_{\leq m-1}$ }}\!
    {\scriptstyle 0\underline{1}1001\;01}\;\mathbf{x}^{+}_{\text{end}, \leq m-1} . 
\end{equation}
\begin{multline}
\hspace{-25pt}
    \footnotesize
    {\scriptstyle 011001\:\underline{0}1}\,x^{+}_{\text{end}, \leq m-1}\!\to\! {\scriptstyle 0\underline{1}1001\:\mathbf{1}1}\,x^{+}_{\text{end}, \leq m-1} \!\to\! {\scriptstyle 0\mathbf{0}\underline{1}001\:11}\,x^{+}_{\text{end}, \leq m-1} \\
    \to {\scriptstyle 00\mathbf{0}00\underline{1}\:11}\,x^{+}_{\text{end}, \leq m-1} \to {\scriptstyle 00000\mathbf{0}\:11}\,\underline{x}^{-}_{\text{start}, \leq m-1} 
\end{multline}
\begin{equation}
    {\scriptstyle 000000\;11}\;\underline{x}^{-}_{\text{start}, \leq m-1}      \xdashrightarrow[\text{\tiny in landscape of } \mathcal{C}^{-}_{n,\leq m-1}]{\text{$T_{m-1}$ {\tiny steps of} $V_{\leq m-1}$ }} {\scriptstyle 000000\;11}\;\mathbf{x}^{-}_{\text{end}, \leq m-1} 
\end{equation}
\begin{equation*}
    {\scriptstyle 000000\:\underline{1}1} \;x^{-}_{\text{end}, \leq m-1}\! \to \!{\scriptstyle 000000\:\mathbf{0}\underline{1}}\;x^{-}_{\text{end}, \leq m-1} \!\to  \!\underbrace{{\scriptstyle 000000\:0\mathbf{0}}\;x^{-}_{\text{end}, \leq m-1}}_{x^{-}_{\text{end}, \leq m}}
\end{equation*}
as shown in the last three blocks of \cref{tab:ter-ascent}.
Just as with the prior case for $\mathcal{C}^+_{n,\leq m}$, this ascent is unique and keeps $(m,B)$ fixed until the last step when it flips from $x_{(m,B)} = 1$ to $x_{(m,B)} = 0$.

\paragraph{Length of ascents:} The ascent in $\mathcal{C}^{+}_{n,\leq m}$ starting from $x^{-}_{\text{start}, \leq m} $ and the ascent in $\mathcal{C}^{-}_{n,\leq m}$ starting from $x^{+}_{\text{start}, \leq m}$ both have length $T_m = 4 + T_{m-1} + 4 + T_{m -1} + 2 = 10 + 2T_{m - 1}$. 
Combing this with the base case of $T_1 = 10$, we obtain $T_m = 10(2^m-1)$.
\end{proof}

\section{Discussion}
\label{sec:disc}

\pardo{High level summary linking back to limits of strict local search}
In this paper, we highlighted the limits of local search compared to more general non-local algorithms.
For this, we focused on problems that can be solved efficiently by non-local methods: VCSPs of bounded pathwidth~\cite{BB73,VCSPsurvey}.
We build the `controlled doubling' VCSP of pathwidth three where every ascent is exponentially long from a designated initial assignment.
Thus, we conclude that all strict local search algorithms can be forced to take an exponential number of steps even on simple VCSPs of pathwidth three.
This leaves only treewidth two as an open question for all ascents long:
is there a Boolean VCSP of treewidth two with all ascents exponential from some designed initial assignment?
We believe that the answer is `no' -- our controlled doubling construction cannot be significantly improved:

\begin{conjecture}
There is a polynomial $p$ such that
given any treewidth two Boolean VCSP-instance on $d$ variables, there is an ascent of length $\leq p(d)$ from any initial assignment.
\end{conjecture}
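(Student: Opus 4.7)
The plan is to prove the conjecture by leveraging the series-parallel structure of treewidth-two constraint graphs and designing a specific ordered local-search schedule that always finds a peak in polynomial time. I would proceed by induction on the number of variables $d$, using the fact that any treewidth-two constraint graph admits a tree decomposition $T$ with bins of size at most $3$, and that any such decomposition has a leaf bin $X_\ell$ whose overlap with its parent bin is an interface $I$ of size at most $2$. The conjecture requires only the existence of \emph{some} short ascent, so the freedom to choose a schedule should make this significantly more tractable than the ``all ascents short'' variant.

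First, I would fix a rooted tree decomposition of width $2$ and define a \emph{leaf-first} local-search schedule: at every step, flip the highest-fitness-improving variable that lies in some leaf bin's private part $X_\ell \setminus I$, and only touch interface variables when no private flip is possible. The key structural lemma to prove is that, for any fixed assignment of the interface $I$, the private variables in $X_\ell$ collectively admit only $O(1)$ improving flips before saturating, because the private subproblem is itself a constant-size Boolean VCSP (bounded scope from below by the $O(1)$ bin size). The interface $I$ has at most $2^{|I|}\le 4$ states, so the private variables can be flipped at most $O(1)$ times per interface reconfiguration.

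The second key step is to control how often the interface $I$ itself changes. Here I would introduce a potential function $\Phi(x) = \mathcal{C}(x) + \sum_\ell \phi_\ell(x|_{I_\ell})$, where $\phi_\ell$ is a carefully chosen boundary term that tracks the current interface state of each leaf-parent pair. The aim is to show that $\Phi$ strictly increases with each step of the schedule and is bounded above by a polynomial in $d$ and the maximum constraint weight $W$. Combined with integrality of constraint weights, this would give a polynomial bound on the total number of steps. One could then peel leaves off the decomposition one at a time and apply induction on the remaining treewidth-two sub-instance, accumulating a polynomial bound $p(d) = \sum_\ell O(1) \cdot 2^{|I_\ell|}$.

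The hard part, and the reason the statement is only a conjecture, will be ruling out the possibility of a recursive doubling phenomenon analogous to the controlled doubling construction but hidden inside treewidth two. In our construction the doubling channel alone has pathwidth two, and it is precisely the addition of the control channel that forces pathwidth three; an adversarial treewidth-two instance might conceivably route both channels through a more efficient, tree-shaped topology. To close this off, I would argue that any would-be control mechanism in treewidth two must share at least one variable with the doubling cycle it governs in every bin it appears in, and the interface-size bound $|I|\le 2$ forces the control and doubling information to collide, collapsing one channel and preventing the recursive factor-of-two blow-up. Formalising this obstruction---essentially a treewidth-two analogue of the impossibility side of our \cref{prop:pw3}---is the central technical hurdle, and likely requires a careful case analysis of how leaf bins can serialise improving flips along series-parallel decompositions.
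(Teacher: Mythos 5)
First, note that the paper does not prove this statement: it is stated explicitly as a conjecture and left open, so there is no proof of record to compare against. Your proposal is an approach sketch rather than a proof, and it contains gaps that are not merely presentational. The most serious one is the potential-function step. You propose a potential $\Phi(x) = \mathcal{C}(x) + \sum_\ell \phi_\ell(x|_{I_\ell})$ that strictly increases along the schedule and is ``bounded above by a polynomial in $d$ and the maximum constraint weight $W$.'' With integer weights this yields a bound of order $\mathrm{poly}(d)\cdot W$ on the number of steps, which is only pseudo-polynomial: the fitness function itself is already such a potential for \emph{every} VCSP and every ascent, with no treewidth assumption at all. Since the hard instances in this area (including the controlled doubling construction, where $m_k = 2^{k-1}(8n+16)-16$) use weights exponential in $d$, a bound depending on $W$ proves nothing. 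To make this strategy work you would need a potential taking only polynomially many distinct values, and nothing in the proposal supplies one.

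The second gap is that the quantity you would actually need to bound --- the number of times an interface $I_\ell$ reconfigures --- is precisely the quantity that is exponential in the known hard constructions. Your observation that the private part $X_\ell \setminus I_\ell$ admits only $O(1)$ improving flips per fixed interface state is correct but vacuous on its own: in a chain of gadgets, each gadget's interface flips twice as often as its neighbour's, which is exactly how the doubling mechanism produces length $10(2^m-1)$. Everything therefore rests on your final paragraph, the claimed ``collision'' obstruction showing that a control channel and a doubling channel cannot coexist within interfaces of size two. You correctly identify this as the central technical hurdle, but you do not prove it, and it is not obviously a local or per-bin argument --- a tree decomposition of width two can branch, so a would-be control signal need not travel through the same bins as the doubling signal. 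Until that obstruction is formalised (the paper itself only offers the same intuition informally in its discussion section), the statement remains a conjecture and the proposal is not a proof.
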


Our controlled doubling construction takes a chain of gadgets that has a long steepest ascent 
and adds a control system that forces only one gadget to be able to `activate' at any given time.
This converts some exponential ascent into the \emph{only} ascent, thus making \emph{all} ascents exponential from the designated initial assignment.
Our approach creates two `channels' of `information flow' through the VCSP-instance.
For the rightflow channel, we see from $\mathcal{C}^{CD}_{k+1}$ to $\mathcal{C}^{CD}_k$ the `doubling' flow from \textcite{KV2025}, which is captured by the exponentially increasing weights of $m_k$.
For the leftflow channel, we see from $\mathcal{C}^{CD}_{k-1}$ to $\mathcal{C}^{CD}_k$ the `control' flow, which is captured by the linearly decreasing weights of $s_k$.
This keeps only a single gadget at a time capable of flips (whereas for \textcite{KV2025}, many gadgets could have flips but steepness selected flips to be in the correct gadget).
Since \textcite{repCP} showed that any tree-structured Boolean VCSP has at most a quadratic ascent, we know that having \emph{some} ascent exponential requires bins of size $3$ in the constraint graph's tree-decomposition.
To block flips in the `wrong' bins and thus block the short ascents, we think that an extra variable will need to be added to each bin that can potentially be `wrong'.
That is why we expect long ascents to not be possible without bins of size four and thus treewidth greater than or equal to three.

\section*{Acknowledgments}

For helpful discussions, AK would like to thank Peter Jeavons and Sofia Vazquez Alferez, 
and WV would like to thank Lucas Veenema.
We would like to both thank Dave Cohen for helpful disucissions and testing our construction in his VCSP simulator~\cite{GitLab}.


\begin{landscape}
\begin{table}[t]
    \scriptsize
    \centering
    \begin{tabular}{|C{0.49cm}|C{0.16cm} C{0.16cm} C{0.16cm} C{0.16cm} C{0.16cm} C{0.16cm} |C{0.21cm} C{0.2cm}| }
        \hline
        P Q & 1 & 2 & 3 & 4 & 5 & 6 & A & B \\
        \hline
        1 0 & \underline{0} & 0 & 0 & 0 & 0 & 0 & 0 & 0 \\
        &  \textbf{1} & \underline{0} & 0 & 0 & 0 & 0 & 0 & 0 \\
        &  1 & \textbf{1} & \underline{0} & 0 & 0 & 0 & 0 & 0 \\
        &  1 & 1 & \textbf{1} & 0 & 0 & \underline{0} & 0 & 0 \\
        &  1 & 1 & 1 & 0 & 0 & \textbf{1} & 0 & 0 \\
        \hline
        1  1 & 1 & 1 & 1 & 0 & 0 & 1 & \underline{0} & 0\\
        &  1 & 1 & 1 & \underline{0} & 0 & 1 & \textbf{1} & 0 \\
        &  1 & 1 & 1 & \textbf{1} & \underline{0} & 1 & 1 & 0 \\
        &  1 & 1 & 1 & 1 & \textbf{1} & \underline{1} & 1 & 0 \\
        &  1 & 1 & 1 & 1 & 1 & \textbf{0} & 1 & 0\\
        \hline
        1  0 & 1 & 1 & 1 & 1 & 1 & 0 & \underline{1} & 0 \\
        &  1 & 1 & 1 & 1 & 1 & 0 & \textbf{0} & \underline{0} \\
        &  1 & 1 & 1 & 1 & 1 & 0 & 0 & \textbf{1} \\
        \hline
        0  0 & \underline{1} & 1 & 1 & 1 & 1 & 0 & 0 & 1 \\
        &  \textbf{0} & 1 & 1 & \underline{1} & 1 & 0 & 0 & 1 \\
        &  0 & 1 & 1 & \textbf{0} & \underline{1} & 0 & 0 & 1 \\
        &  0 & 1 & 1 & 0 & \textbf{0} & \underline{0} & 0 & 1 \\
        &  0 & 1 & 1 & 0 & 0 & \textbf{1} & 0 & 1 \\
        \hline
        0  1 & 0 & 1 & 1 & 0 & 0 & 1 & \underline{0} & 1\\
        &  0 & \underline{1} & 1 & 0 & 0 & 1 & \textbf{1} & 1 \\
        &  0 & \textbf{0} & \underline{1} & 0 & 0 & 1 & 1 & 1 \\
        &  0 & 0 & \textbf{0} & 0 & 0 & \underline{1} & 1 & 1 \\
        &  0 & 0 & 0 & 0 & 0 & \textbf{0} & 1 & 1 \\
        \hline
        0  0 & 0 & 0 & 0 & 0 & 0 & 0 & \underline{1} & 1 \\
        &  0 & 0 & 0 & 0 & 0 & 0 & \textbf{0} & \underline{1}\\
        &  0 & 0 & 0 & 0 & 0 & 0 & 0 & \textbf{0} \\
        \hline
    \end{tabular}
    \begin{tabular}{|C{1.63cm} C{1.6cm} C{1.23cm} C{1.6cm} C{1.23cm} C{1.23cm} |C{1.23cm} C{1.23cm}| }
        \hline
        1 & 2 & 3 & 4 & 5 & 6 & A & B  \\
        \hline
        \cellcolor{green!25}$2s_k+5$ & $-(r_k+s_k+5)$ & $-(r_k+3)$ & $-(r_k+2s_k+9)$ & $-1$ & $-(r_k+1)$ & $-(s_k+3)$ & $-(s_k+1)$ \\
        $-(2s_k+5)$ & \cellcolor{green!25}$3$ & $-(r_k+3)$ & $-1$ & $-1$ & $-(r_k+1)$ & $-(s_k+3)$ & $-(3s_k+5)$ \\
        $-(r_k+3s_k+13)$ & $-3$ & \cellcolor{green!25}$1$ & $-1$ & $-1$ & $-(r_k+1)$ & $-(s_k+5)$ & $-(2s_k+3)$ \\
        $-(r_k+3s_k+13)$ & $-(r_k+7)$ & $-1$ & $-1$ & $-1$ & \cellcolor{green!25}$1$ & $-(s_k+5)$ & $-(2s_k+3)$ \\
        $-(r_k+3s_k+13)$ & $-(r_k+7)$ & $-(r_k+3)$ & $-1$ & $-(r_k+3)$ & $-1$ & $-(s_k+5)$ & $-(2s_k+3)$ \\
        \hline
        $-(r_k+3s_k+9)$ & $-(r_k+7)$ & $-(r_k+3)$ & $-1$ & $-(r_k+3)$ & $-(r_k+1)$ & \cellcolor{green!25}$1$ & $-(2s_k+3)$ \\
        $-(r_k+3s_k+9)$ & $-(r_k+5)$ & $-(r_k+3)$ & \cellcolor{green!25}$s_k+1$ & $-(r_k+3)$ & $-(r_k+1)$ & $-1$ & $-(s_k+1)$ \\
        $-(2r_k+5s_k+21)$ & $-(r_k+5)$ & $-(r_k+3)$ & $-(s_k+1)$ & \cellcolor{green!25}$1$ & $-(r_k+1)$ & $-(s_k+3)$ & $-(s_k+1)$ \\
        $-(2r_k+5s_k+21)$ & $-(r_k+5)$ & $-(r_k+3)$ & $-(r_k+s_k+5)$ & $-1$ & \cellcolor{green!25}$1$ & $-(s_k+3)$ & $-(s_k+1)$ \\
        $-(2r_k+5s_k+21)$ & $-(r_k+5)$ & $-1$ & $-(r_k+s_k+5)$ & $-(r_k+3)$ & $-1$ & $-(s_k+3)$ & $-(s_k+1)$ \\
        \hline
        $-(2r_k+5s_k+21)$ & $-(r_k+5)$ & $-1$ & $-(r_k+s_k+5)$ & $-(r_k+3)$ & $-(r_k+1)$ & \cellcolor{green!25}$3$ & $-(s_k+1)$ \\
        $-(2r_k+5s_k+21)$ & $-(r_k+7)$ & $-1$ & $-(r_k+3)$ & $-(r_k+3)$ & $-(r_k+1)$ & $-3$ & \cellcolor{green!25}$1$ \\
        $-(2r_k+3s_k+17)$ & $-(r_k+s_k+9)$ & $-1$ & $-(r_k+2s_k+7)$ & $-(r_k+3)$ & $-(r_k+1)$ & $-(s_k+5) $ & $-1$\\
        \hline
        \cellcolor{green!25}$2s_k+5$ & $-(r_k+s_k+9)$ & $-1$ & $-(r_k+2s_k+7)$ & $-(r_k+3)$ & $-(r_k+1)$ & $-(s_k+5)$ & $-(s_k+1)$ \\
        $-(2s_k+5)$ & $-1$ & $-1$ & \cellcolor{green!25}$1$ & $-(r_k+3)$ & $-(r_k+1)$ & $-(s_k+5)$ & $-(3s_k+5)$ \\
        $-(r_k+4s_k+13)$ & $-1$ & $-1$ & $-1$ & \cellcolor{green!25}$1$ & $-(r_k+1)$ & $-3$ & $-(s_k+1)$ \\
        $-(r_k+4s_k+13)$ & $-1$ & $-1$ & $-(r_k+5)$ & $-1$ & \cellcolor{green!25}$1$ & $-3$ & $-(s_k+1)$ \\
        $-(r_k+4s_k+13)$ & $-1$ & $-(r_k+3)$ & $-(r_k+5)$ & $-(r_k+3)$ & $-1$ & $-3$ & $-(s_k+1)$ \\
        \hline
        $-(r_k+4s_k+13)$ & $-1$ & $-(r_k+3)$ & $-(r_k+5)$ & $-(r_k+3)$ & $-(r_k+1)$ & \cellcolor{green!25}$s_k+3$ & $-(s_k+1)$ \\
        $-(r_k+4s_k+13)$ & \cellcolor{green!25}$1$ & $-(r_k+3)$ & $-(r_k+s_k+7)$ & $-(r_k+3)$ & $-(r_k+1)$ & $-(s_k+3)$ & $-(2s_k+3)$ \\
        $-(2r_k+5s_k+21)$ & $-1$ & \cellcolor{green!25}$1$ & $-(r_k+s_k+7)$ & $-(r_k+3)$ & $-(r_k+1)$ & $-(s_k+5)$ & $-(s_k+1)$ \\
        $-(2r_k+5s_k+21)$ & $-(r_k+5)$ & $-1$ & $-(r_k+s_k+7)$ & $-(r_k+3)$ & \cellcolor{green!25}$1$ & $-(s_k+5)$ & $-(s_k+1)$ \\
        $-(2r_k+5s_k+21)$ & $-(r_k+5)$ & $-(r_k+3)$ &$-(r_k+s_k+7)$ & $-1$  & $-1$ & $-(s_k+5)$ &$-(s_k+1)$ \\
        \hline
        $-(2r_k+5s_k+21)$ & $-(r_k+5)$ & $-(r_k+3)$ &$-(r_k+s_k+7)$ & $-1$  & $-(r_k+1)$ & \cellcolor{green!25}$1$ &$-(s_k+1)$ \\
        $-(2r_k+5s_k+21)$ & $-(r_k+3)$ & $-(r_k+3)$ &$-(r_k+5)$ & $-1$  & $-(r_k+1)$ & $-1$ &\cellcolor{green!25}$1$ \\
        $-(2r_k+3s_k+17)$ & $-(r_k+s_k+5)$ & $-(r_k+3)$ &$-(r_k+2s_k+9)$ & $-1$  & $-(r_k+1)$ & $-(s_k+3)$ &$-1$ \\
        \hline  
    \end{tabular}
    \caption{
    \textbf{Important ascents in the gadget $\mathcal{C}^{CD}_k$ as shown in Figure~\ref{fig:Gadget-Ter}.} 
    The columns corresponds to the variable of the $k$'th gadget. 
    The columns are structured in three block types.
    The block with columns $P$ and $Q$ correspond to state of the gadgets to the left and right respectively. 
    The blocks with variables labeled by numbers, correspond to the doubling variables. 
    The blocks with variables labeled by $A$ and $B$, correspond to the controlling variables.
    In the left three blocks the rows correspond to an assignment of the variables.
    The right two blocks show the change in fitness $\Delta \mathcal{C}$ when the assignment of the variable in the corresponding column is changed, while keeping the other variables fixed in the assignment in the left part.  
    The rows are structured based on the assignment of neighboring gadgets, i.e. the assignment of the variables $P$ and $Q$. 
    A positive change in fitness is marked in green.
    The variable that will flip in the ascent is underlined and the variable that has been flipped with respect to the previous assignment is bolded.
    }
    \label{tab:ter-ascent}
\end{table}
\end{landscape}

\appendix
\onecolumn

\section{Weights for controlled doubling construction}
\label{sec:weights}

We give the constraints weights in terms of parameters $r_k = (18n-19)2^k-30(n-k)+8$ and $s_k=6(n-k)$ and show how they relate to each other.
The constraints are grouped by the constraints involving control variables (\cref{eq:c-start}-\eqref{eq:c-end}) and the constraints for the doubling variables (\cref{eq:d-start}-\eqref{eq:d-end}).

\begin{alignat}{2}
    W((k,B), (k+1,A)) & &&= s_k \label{eq:c-start}\\
    W((k,B)) &= -(|W((k,B), (k+1,A))| +1) &&= -(s_k + 1)\\
    W((k,2),(k,B)) &= |W((k,B))| + 1 &&= s_k + 2\\
    W((k,A),(k,B)) &= |W((k,B))| + 1 &&= s_k + 2\\
    W((k,1),(k,B)) & -|W((k,2),(k,B))+W((k,A),(k,B))| &&= -2(s_k+2)\\
    W((k,4),(k,B)) &= |W((k,1),(k,B))|  &&= 2(s_k + 2)\\    
    W((k,2),(k,A),(k,B)) &= -|W((k,4),(k,B))| &&=-2(s_k+2) \label{eq:t-start}\\ 
    W((k,A)) &= -(|W((k,A),(k,B))| + 1) &&= -(s_k+3) \label{eq:CkA}\\
    W((k,2),(k,A)) & &&= -2\\
    W((k,4),(k,A)) &=  &&= s_k + 2\\ 
    W((k-1,B),(k,A)) &= |W((k,A)) + W((k,2),(k,A))| + 1 &&= \underbrace{s_k + 6}_{s_{k-1}} \label{eq:c-end}\\
    W((k,6), (k-1,1)) & &&= r_k \label{eq:d-start}\\
    W((k,6)) &= -(|W((k,6), (k-1,1))| + 1) &&= -(r_k+1)\\
    W((k,3),(k,6)) &= |W((k,6))| + 1  &&= r_k+2\\
    W((k,5),(k,6)) &= -|W((k,3),(k,6))| &&= -(r_k+2)\\
    W((k,3)) &= -(|W((k,3),(k,6))| + 1) &&= -(r_k+3)\\
    W((k,2),(k,3)) &= |W((k,3))| + 1 &&= r_k+4\\
    W((k,2)) &= -|W((k,2),(k,3)+W((k,2),(k,B))| + 1 &&= -(r_k+s_k+5)\\
    W((k,1),(k,2)) &= |W((k,2))+W((k,2),(k,A))| + 1 &&= r_k+6\\
    W((k,5)) & &&= -1\\
    W((k,4),(k,5)) &= |W((k,5)) + W((k,5),(k,6))| + 1 &&= r_k+4\\    
    W((k,4)) &= -(|W((k,4),(k,5)) + W((k,4),(k,B))| + 1) &&= -(r_k+2s_k+9)\\
    W((k,1),(k,4)) &= |W((k,4))| - 1  &&= r_k + 2s_k+8\\
    W((k,1)) &= -(|W((k,1),(k,2)) + W((k,1),(k,4))| + 1) &&= -(2m_k+13) \label{eq:Ck1}\\
    W((k+1,6), (k,1)) &= |W((k,1) + W((k,1),(k,B))| + 1 &&= \underbrace{2r_k+5s_k+22}_{r_{k+1}} \label{eq:d-end}
\end{alignat}

\newpage
\section{Base case: $\mathcal{C}^{\pm}_{n,\leq 1}$}
\label{sec:base_case}

\pardo{Why is edge {(1,6), (1,A)} needed}
For the first gadget there is no gadget to the right. 
Therefor, the constraints with scopes $\{(1,6), (0,1)\}$ and $\{(0,B), (1,A)\}$ are not possible.
In the other gadgets, these constraints ensure that flipping $(k,6)$ eventually leads to a flip of $(k,A)$ via the neighboring gadget $k-1$ on the right.
To replicate this effect, we replace these constraints by binary constraint with scope $\{(1,6), (1,A)\}$ and weight 
\begin{equation}
W((1,6),(1,A))= r_1= s_0=6n
\end{equation}
as shown in \cref{fig:Gadget-k1}.
This constraint enforces that $(1,A)$ flips directly after $(1,6)$ flipped.
%
\pardo{state lemma and prove it using table 2}
Now, we are able to prove that the base case $\mathcal{C}^{\pm}_{n,\leq 1}$ of our controlled doubling construction has a unique ascent of length 10:
\begin{lemma}
    Consider the VCSP-instances $\mathcal{C}^{+}_{n,\leq 1}$ and $\mathcal{C}^{-}_{n,\leq 1}$ from \cref{fig:Gadget-k1}.
    In the fitness landscape of $\mathcal{C}^{+}_{n,\leq 1}$, there is only one ascent from the initial assignment $x^{+}_{\text{start}}\coloneq  000000\;00$ -- this ascent ends at the peak $x^{+}_{\text{end}} \coloneq  111110\;01$; 
    and in the fitness landscape of $\mathcal{C}^{-}_{n,\leq 1}$, there is only one ascent ascent from the initial assignment $x^{-}_{\text{start}} ( = x^{+}_{\text{end}})$ -- this ascent ends at the peak $x^{-}_{\text{end},} ( = x^{+}_{\text{start}})$. 
    Both of these ascents flip $x_{(m,B)}$ only on their last step and have a total length $T_{1} = 10$.
    \label{thm:expAsc-base} 
\end{lemma}
\begin{proof}
    \pardo{Proof in the fitness landscape of $\mathcal{C}^{+}_{n, \leq1}$}
    The ascent starting from $x^{+}_{\text{start}} = 000000\;00$ in the fitness landscape of $\mathcal{C}^{+}_{n, \leq1}$, is provided in Table \ref{tab:ter-ascent-base-P1}. 
    Note that for each step there is only a single variable that can change its assignment to increase fitness. 
    First, the variables in the top arm of the cycle, $(1,1)$, $(1,2)$, $(1,3)$ and $(1,6)$, flip.
    This is followed by a flip of the control variable $(1,A)$. 
    Next, the variables in the bottom arm of the cycle, $(1,4)$, $(1,5)$ and $(1,6)$, will flip.
    Afterwards, the control variable $(1,A)$ flips again. 
    Finally, the ascent ends with a flip of the control variable $(1,B)$.
    So, the ascent ends after 10 steps at $x^{+}_{\text{end}} = 111110\;01$ and $x_{(1,B)}$ is fixed until the last step.

    \pardo{Proof in the fitness landscape of $\mathcal{C}^{-}_{n, \leq1}$}
    The ascent starting from $x^{-}_{\text{start}} = x^{+}_{\text{end}} = 111110\;01$ in the fitness landscape of $\mathcal{C}^{-}_{n, \leq1}$ is given in Table \ref{tab:ter-ascent-base-P0}.
    Again, for each step there is only a single variable that can change its assignment to increase fitness. 
    First, the variable in the bottom arm of the cycle $(1,1)$, $(1,4)$, $(1,5)$ and $(1,6)$ flip.
    Afterwards, the control variable $(1,A)$ flips.
    Next, the variables in the top arm of the cycle $(1,2)$, $(1,3)$ and $(1,6)$ will flip, after which the control variable $(1,A)$ flips again.
    The ascent concludes with the flip of control variable $(1,B)$.
    So, the ascent ends after 10 steps at $x^{-}_{\text{end}} = x^{+}_{\text{start}} = 000000\;00$ and $x_{(1,B)}$ is fixed until the last step.
\end{proof}

\begin{figure*}[!t]
    \centering
    \usetikzlibrary{fit,backgrounds}
\pgfdeclarelayer{middle}
\pgfsetlayers{background,middle,main}

    \begin{tikzpicture}[every node/.style={minimum size=30pt,font=\small}]
    \tikzmath{\xunit = 3.1; \yunit =2.8;}

        \begin{pgfonlayer}{main}
        \node[align=center,draw] (M) at (2.6*\xunit,1*\yunit) {$r_1=6n$\\$s_1= 6n-6$};
        \node[draw, circle, label={[label distance=-0.1cm, rotate=-45]130:$-(30n-1)$}] (ANDin) at (-0.9*\xunit,0) {$1,1$};
        \node[draw, circle, label={[label distance=-0.2cm, rotate=20]90:$-(12n-1)$}, fill = white!] (AND1) at (0,\yunit) {$1,2$};
        \node[draw, circle, label={[label distance=-0.2cm, rotate=-20]-90:$-(18n-3)$}, fill = white!] (AND2) at (0,-\yunit) {$1,4$};

        \node[draw, circle, label={[label distance=-0.2cm,rotate=-20]90:$-(6n+3)$}] (XOR1) at (1.3*\xunit,\yunit) {$1,3$};
        \node[draw, circle, label={[label distance=-0.2cm,rotate=20]-90:$-1$}] (XOR2) at (1.3*\xunit,-\yunit) {$1,5$};
        \node[draw, circle, label={[label distance=-0.1cm,rotate=40]30:$-(6n+1)$}] (XORjoin) at (2.1*\xunit,0) {$1,6$};
        
        \node[draw, circle, label = {[label distance=-0.3cm,rotate=-0]150:$-(6n-5)$}, fill = white!] (B) at (0*\xunit, 0*\yunit) {$1,B$};
        \node[draw, circle, label = {[label distance=-0.2cm,rotate=-20]90:$-(6n-3)$}, fill = white!] (A) at (1.3*\xunit, 0*\yunit) {$1,A$};
        
        \node[draw, dashed, circle ] (inv) at (-2*\xunit,0) {\tiny $2,6$};
        \node[draw, dashed, circle] (inv2) at (-2*\xunit, -1*\yunit) {\tiny $2,A$};
        \end{pgfonlayer}

        \begin{pgfonlayer}{background}
        \draw[-, thick] (A) -- (B) node [midway, fill = white!, sloped] {$6n-4$};
        \draw[-, thick] (A)  -- (AND1) node [pos=0.5, fill = white!, sloped] {$-2$};
        \draw[-, thick] (A)  -- (AND2) node [pos=0.5, fill = white!, sloped] {$6n-4$};
        \draw[-, thick] (XORjoin) -- (A) node [pos = 0.5, fill = white!, sloped, above = -0.3] {$\overbrace{6n}^{m_1=s_0}$};
        
        \draw[-, thick] (B) -- (AND1) node [pos=0.5, fill = white] {$6n-4$};
        \draw[-, thick] (B)  -- (AND2) node [fill = white!, pos = 0.5] {$12n-8$};
        \draw[-, thick] (B)  -- (ANDin) node [fill = white!,pos = 0.5, sloped] {$-(12n-8)$};
        \draw[-, thick] (inv2) ..controls +(right:25mm).. (B) node [fill = white!, pos = 0.5, sloped, above = -0.3]{$\overbrace{6n-6}^{s_1}$};
        
        \draw[-,dashed] (inv) -- (ANDin) node [midway, fill = white!, above = -0.3] {$\overbrace{42n-8}^{m_{2}}$};
        \draw[draw = white, double = black, very thick] (ANDin) -- (AND1) node [pos = 0.4, fill = white!, sloped] {$12n+2$};
        \draw [draw = white, double = black, very thick] (ANDin) -- (AND2)  node [midway, fill = white!, sloped] {$18n-3$};
        \draw[-, thick] (AND1) -- (XOR1) node [midway, fill = white!, sloped] {$6n+4$};
        \draw[draw = white, double = black, very thick] (AND2) -- (XOR2) node [fill = white!, midway,  sloped] {$6n+4$};
        \draw[-, thick] (XOR1) -- (XORjoin) node [midway, fill = white!, sloped] {$6n+2$};
        \draw[draw = white, double = black, very thick] (XOR2) -- (XORjoin) node [midway, fill = white!, sloped] {$-(6n+2)$};
        \end{pgfonlayer}
        
        \begin{pgfonlayer}{middle}
        
        \fill[gray!100, opacity=0.3] (AND2.center) -- (A.center) -- (B.center) -- cycle;

        \coordinate (centroid2) at ($(AND2)!0.65!(B)!0.35!(A)$);

        \node at (centroid2) {\small \underline{$-(12n-8)$}};
        \end{pgfonlayer}
        
    \end{tikzpicture}
    \caption{Gadget $\mathcal{C}^{CD}_{1}$ of controlled doubling construction with $r_1=s_0=6n$:
    unary constraint weights are next to their variables, 
    binary constraint weights are on the edges that specify their scope, 
    ternary constraint weights are in the center of the shaded area that specify their scope. 
    Dotted edges and vertices are connections to the second gadget $\mathcal{C}^{CD}_2$.}
    \label{fig:Gadget-k1}
\end{figure*}

\begin{landscape}
\begin{table}[t]
    \scriptsize
    \centering

    \begin{subtable}[t]{\textwidth}
        \begin{tabular}{|C{0.16cm} C{0.16cm} C{0.16cm} C{0.16cm} C{0.16cm} C{0.16cm} |C{0.21cm} C{0.2cm}| }
        \hline
        1 & 2 & 3 & 4 & 5 & 6 & A & B \\
        \hline
        \underline{0} & 0 & 0 & 0 & 0 & 0 & 0 & 0 \\
        \textbf{1} & \underline{0} & 0 & 0 & 0 & 0 & 0 & 0 \\
        1 & \textbf{1} & \underline{0} & 0 & 0 & 0 & 0 & 0 \\
        1 & 1 & \textbf{1} & 0 & 0 & \underline{0} & 0 & 0 \\
        1 & 1 & 1 & 0 & 0 & \textbf{1} & \underline{0} & 0\\
        1 & 1 & 1 & \underline{0} & 0 & 1 & \textbf{1} & 0 \\
        1 & 1 & 1 & \textbf{1} & \underline{0} & 1 & 1 & 0 \\
        1 & 1 & 1 & 1 & \textbf{1} & \underline{1} & 1 & 0 \\
        1 & 1 & 1 & 1 & 1 & \textbf{0} & \underline{1} & 0 \\
        1 & 1 & 1 & 1 & 1 & 0 & \textbf{0} & \underline{0} \\
        1 & 1 & 1 & 1 & 1 & 0 & 0 & \textbf{1} \\
        \hline
    \end{tabular}
    \begin{tabular}{|C{1.63cm} C{1.6cm} C{1.23cm} C{1.6cm} C{1.23cm} C{1.23cm} |C{1.23cm} C{1.23cm}| }
        \hline
        1 & 2 & 3 & 4 & 5 & 6 & A & B  \\
        \hline
        \cellcolor{green!25}$12n-7$ & $-(12n-1)$ & $-(6n+3)$ & $-(18n-3)$ & $-1$ & $-(6n+1)$ & $-(6n-3)$ & $-(6n-5)$ \\
        $-(12n-7)$ & \cellcolor{green!25}$3$ & $-(6n+3)$ & $-1$ & $-1$ & $-(6n+1)$ & $-(6n-3)$ & $-(18n-13)$ \\
        $-(24n-5)$ & $-3$ & \cellcolor{green!25}$1$ & $-1$ & $-1$ & $-(6n+1)$ & $-(6n-6+5)$ & $-(12n-9)$ \\
        $-(24n-5)$ & $-(6n+7)$ & $-1$ & $-1$ & $-1$ & \cellcolor{green!25}$1$ & $-(6n-1)$ & $-(12n-9)$ \\
        $-(24n-9)$ & $-(6n+7)$ & $-(6n+3)$ & $-1$ & $-(6n+3)$ & $-1$ & \cellcolor{green!25}$1$ & $-(12n-9)$ \\
        $-(24n-9)$ & $-(6n+5)$ & $-(6n+3)$ & \cellcolor{green!25}$6n-5$ & $-(6n+3)$ & $-(6n+1)$ & $-1$ & $-(6n-5)$ \\
        $-(42n-9)$ & $-(6n+5)$ & $-(6n+3)$ & $-(6n-6+1)$ & \cellcolor{green!25}$1$ & $-(6n+1)$ & $-(6n-3)$ & $-(6n-5)$ \\
        $-(42n-9)$ & $-(6n+5)$ & $-(6n+3)$ & $-(12n-1)$ & $-1$ & \cellcolor{green!25}$1$ & $-(6n-3)$ & $-(6n-5)$ \\
        $-(42n-9)$ & $-(6n+5)$ & $-1$ & $-(12n-1)$ & $-(6n+3)$ & $-1$ & \cellcolor{green!25}$3$ & $-(6n-5)$ \\
        $-(42n-9)$ & $-(6n+7)$ & $-1$ & $-(6n+3)$ & $-(6n+3)$ & $-(6n+1)$ & $-3$ & \cellcolor{green!25}$1$ \\
        $-(30n-1)$ & $-(12n+3)$ & $-1$ & $-(18n-5)$ & $-(6n+3)$ & $-(6n+1)$ & $-(6n-1) $ & $-1$\\
        \hline
    \end{tabular}
    \caption{$\mathcal{C}^{+}_{n,\leq 1}$}
    \label{tab:ter-ascent-base-P1}
    \end{subtable}
    \begin{subtable}[t]{\textwidth}
        \begin{tabular}{|C{0.16cm} C{0.16cm} C{0.16cm} C{0.16cm} C{0.16cm} C{0.16cm} |C{0.21cm} C{0.2cm}| }
        \hline
        1 & 2 & 3 & 4 & 5 & 6 & A & B  \\
        \hline
        \underline{1} & 1 & 1 & 1 & 1 & 0 & 0 & 1 \\
        \textbf{0} & 1 & 1 & \underline{1} & 1 & 0 & 0 & 1 \\
        0 & 1 & 1 & \textbf{0} & \underline{1} & 0 & 0 & 1 \\
        0 & 1 & 1 & 0 & \textbf{0} & \underline{0} & 0 & 1 \\
        0 & 1 & 1 & 0 & 0 & \textbf{1} & \underline{0} & 1\\
        0 & \underline{1} & 1 & 0 & 0 & 1 & \textbf{1} & 1 \\
        0 & \textbf{0} & \underline{1} & 0 & 0 & 1 & 1 & 1 \\
        0 & 0 & \textbf{0} & 0 & 0 & \underline{1} & 1 & 1 \\
        0 & 0 & 0 & 0 & 0 & \textbf{0} & \underline{1} & 1 \\
        0 & 0 & 0 & 0 & 0 & 0 & \textbf{0} & \underline{1}\\
        0 & 0 & 0 & 0 & 0 & 0 & 0 & \textbf{0} \\
        \hline
    \end{tabular}
    \begin{tabular}{|C{1.63cm} C{1.6cm} C{1.23cm} C{1.6cm} C{1.23cm} C{1.23cm} |C{1.23cm} C{1.23cm}| }
        \hline
        1 & 2 & 3 & 4 & 5 & 6 & A & B  \\
        \hline
        \cellcolor{green!25}$12n-7$ & $-(12n+3)$ & $-1$ & $-(18n-5)$ & $-(6n+3)$ & $-(6n+1)$ & $-(6n-1)$ & $-(6n-5)$ \\
        $-(12n-7)$ & $-1$ & $-1$ & \cellcolor{green!25}$1$ & $-(6n+3)$ & $-(6n+1)$ & $-(6n-1)$ & $-(18n-13)$ \\
        $-(30n-11)$ & $-1$ & $-1$ & $-1$ & \cellcolor{green!25}$1$ & $-(6n+1)$ & $-3$ & $-(6n-5)$ \\
        $-(30n-11)$ & $-1$ & $-1$ & $-(6n+5)$ & $-1$ & \cellcolor{green!25}$1$ & $-3$ & $-(6n-5)$ \\
        $-(30n-11)$ & $-1$ & $-(6n+3)$ & $-(6n+5)$ & $-(6n+3)$ & $-1$ & \cellcolor{green!25}$6n-3$ & $-(6n-5)$ \\
        $-(30n-11)$ & \cellcolor{green!25}$1$ & $-(6n+3)$ & $-(12n+1)$ & $-(6n+3)$ & $-(6n+1)$ & $-(6n-3)$ & $-(12n-9)$ \\
        $-(42n-9)$ & $-1$ & \cellcolor{green!25}$1$ & $-(12n-1)$ & $-(6n+3)$ & $-(6n+1)$ & $-(6n-1)$ & $-(6n-5)$ \\
        $-(42n-9)$ & $-(6n+5)$ & $-1$ & $-(12n+1)$ & $-(6n+3)$ & \cellcolor{green!25}$1$ & $-(6n-1)$ & $-(6n-5)$ \\
        $-(42n-9)$ & $-(6n+5)$ & $-(6n+3)$ &$-(12n+1)$ & $-1$  & $-1$ & \cellcolor{green!25}$1$ &$-(6n-5)$ \\
        $-(42n-9)$ & $-(6n+3)$ & $-(6n+3)$ &$-(6n+5)$ & $-1$  & $-(6n+1)$ & $-1$ &\cellcolor{green!25}$1$ \\
        $-(30n-1)$ & $-(12n-1)$ & $-(6n+3)$ &$-(18n-3)$ & $-1$  & $-(6n+1)$ & $-(6n-3)$ &$-1$ \\
        \hline  
    \end{tabular}
    \caption{$\mathcal{C}^{-}_{n,\leq 1}$}
    \label{tab:ter-ascent-base-P0}
    \end{subtable}
    \caption{
    \textbf{Ascent in the first gadget $\mathcal{C}^{CD}_{1}$ as shown in Figure~\ref{fig:Gadget-k1}.} 
    The columns corresponds to the variable of the first gadget. 
    The columns are structured in two block types.
    The blocks with variables labeled by numbers, correspond to the doubling variables. 
    The blocks with variables labeled by $A$ and $B$, correspond to the controlling variables.
    In the left two blocks the rows correspond to an assignment of the variables.
    The right two blocks show the change in fitness $\Delta \mathcal{C}$ when the assignment of the variable in the corresponding column is changed, while keeping the other variables fixed in the assignment in the left part.  
    A positive change in fitness is marked in green.
    The variable that will flip in the ascent is underlined and the variable that has been flipped with respect to the previous assignment is bolded.
    }
\end{table}
\end{landscape}

\newpage
\printbibliography

\end{document}